\title{An Improved Integrality Gap for the C{\u{a}}linescu-Karloff-Rabani Relaxation for Multiway Cut}
\author{
Haris Angelidakis\thanks{Email: \texttt{hangel@ttic.edu}.} \vspace{-0.5em}\\
TTIC
\and
Yury Makarychev\thanks{Email: \texttt{yury@ttic.edu}. Supported by NSF awards CAREER CCF-1150062 and IIS-1302662.} \vspace{-0.5em}\\
TTIC
\and
Pasin Manurangsi\thanks{Email: \texttt{pasin@berkeley.edu}. Supported by NSF Grants No. CCF-1540685 and CCF-1655215. Part of this work was done while the author was visiting TTIC.} \vspace{-0.5em}\\
UC Berkeley
}
\begin{document}

\maketitle

\begin{abstract}
We construct an improved integrality gap instance for the C{\u{a}}linescu-Karloff-Rabani LP relaxation of the Multiway Cut problem. In particular, for $k \geqs 3$ terminals, our instance has an integrality ratio of $6 / (5 + \frac{1}{k - 1}) - \varepsilon$, for every constant $\varepsilon > 0$. For every $k \geqs 4$, our result improves upon a long-standing lower bound of $8 / (7 + \frac{1}{k - 1})$ by Freund and Karloff~\cite{FK00}. Due to Manokaran \etal's result~\cite{MNRS08}, our integrality gap also implies Unique Games hardness of approximating Multiway Cut of the same ratio. 
\end{abstract}

\section{Introduction}

In the Multiway Cut problem, we are given a weighted undirected graph and a set of $k$ terminal vertices, and  we are asked to find a set of edges with minimum total weight whose removal disconnects all the terminals. Equivalently, the objective can also be stated as partitioning the graph into $k$ clusters, each containing exactly one terminal, such that the total weight of edges across clusters is minimized. Such a partition is sometimes called a \emph{$k$-way cut} and the total weight of edges across clusters is called the \emph{cost} of the cut.

Since its introduction in the early 1980s by Dahlhaus, Johnson, Papadimitriou, Seymour, and Yannakakis~\cite{DJPSY83}, the problem has been extensively studied in the approximation algorithms community~\cite{C89,CR91,DJPSY94,BTV99,CT99,CKR00,FK00,KKSTY04,BNS13,SV14} and has, by now, become one of the classic graph partitioning problems taught in many graduate approximation algorithms courses. Despite this, the approximability of the problem when $k \geqs 4$ still remains open.

When $k = 2$, Multiway Cut is simply Minimum $s$-$t$ Cut, which is solvable in polynomial time. For $k \geqs 3$, Dahlhaus \etal~\cite{DJPSY94} showed that the problem becomes APX-hard and gave the first approximation algorithm for the problem, which achieves an approximation ratio of $(2 - 2/k)$. Due to the combinatorial nature of the algorithm, many linear programs were proposed following their work (see e.g.~\cite{C89,CR91,BTV99}). However, it was not until C{\u{a}}linescu, Karloff and Rabani's work~\cite{CKR00} that a significant improvement in the approximation ratio was made. Their linear programming relaxation, commonly referred to as the CKR relaxation, on a graph $G = (V, E, w)$ and terminals $\{t_1, \dots, t_k \} \subseteq V$, can be formulated as
\begin{align*}
\text{minimize } &\sum_{e = (u, v) \in E} w(e) \cdot \frac{1}{2} \|x^u - x^v\|_1 \\ 
\text{subject to } 
&\forall u \in V, \; x^u \in \Delta_k, \\
&\forall i \in [k],\; x^{t_i} = e^i,
\end{align*}
where $[k] = \{1, \dots, k\}$, $\Delta_k = \{(x_1, \dots, x_k) \in [0, 1]^k \mid x_1 + \cdots + x_k = 1\}$ denotes the $k$-simplex and $e^i$ is the $i$-th vertex of the simplex, i.e. $e^i_i = 1$. C{\u{a}}linescu, Karloff and Rabani gave a rounding scheme for this LP that yields a $(3/2 - 1/k)$-approximation algorithm for Multiway Cut~\cite{CKR00}. Exploiting the geometric nature of the relaxation even further, Karger, Klein, Stein, Thorup and Young~\cite{KKSTY04} gave improved rounding schemes for the relaxation; for general $k$, they gave a $1.3438$-approximation algorithm for the problem. They also conducted computational experiments that led to improvements over small $k$'s. For $k = 3$, in particular, they gave a $12/11$-approximation algorithm and proved that this is tight by constructing an integrality gap example of ratio $12/11 - \varepsilon$, for every $\varepsilon > 0$. The same result was also independently discovered by Cunningham and Tang~\cite{CT99}. More recently, Buchbinder, Naor and Schwartz~\cite{BNS13} gave a neat $4/3$-approximation algorithm to the problem for general $k$ and additionally showed how to push the ratio down to $1.3239$. Their algorithm was later improved by Sharma and Vondr\'{a}k~\cite{SV14} to get an approximation ratio of $1.2965$. This remains the state-of-the-art approximation for sufficiently large $k$. We remark that the Sharma-Vondr\'{a}k algorithm is quite complicated and requires a computer-assisted proof. To circumvent this, Buchbinder, Schwartz and Weizman~\cite{BSW17} recently came up with a simplified algorithm and analytically showed that it yielded roughly the same approximation ratio as Sharma and Vondr\'{a}k's.

The CKR relaxation not only leads to improved algorithms for Multiway Cut but also has a remarkable consequence on the approximability of the problem. In a work by Manokaran, Naor, Raghavendra and Schwartz~\cite{MNRS08}, it was shown, assuming the Unique Games Conjecture (UGC), that, if there exists an instance of Multiway Cut with integrality gap $\tau$ for the CKR relaxation, then it is NP-hard to approximate Multiway Cut to within a factor of $\tau - \varepsilon$ of the optimum, for every constant $\varepsilon > 0$. Note here that the integrality gap of an instance of Multiway Cut is the ratio between the integral optimum and the value of the CKR relaxation of the instance.

Roughly speaking, Manokaran \etal's result means that, if one believes in the UGC, the CKR relaxation achieves essentially the best approximation ratio one can hope to get in polynomial time for Multiway Cut. Despite this strong connection, few lower bounds for the CKR relaxation are known. Apart from the aforementioned $12/11 - \varepsilon$ integrality gap for $k = 3$ by Karger \etal~\cite{KKSTY04} and Cunningham and Tang~\cite{CT99}, the only other known lower bound is an $8/(7 + \frac{1}{k - 1})$-integrality gap which was constructed by Freund and Karloff~\cite{FK00} not long after the introduction of the CKR relaxation. Hence, it is a natural and intriguing task to construct improved integrality gap examples for the relaxation in an attempt to bridge the gap between approximation algorithms and hardness of approximation for the Multiway Cut.

\subsection{Our Contributions}

In this paper, we provide a new construction of integrality gap instances for Multiway Cut. Our construction achieves an integrality gap of $6 / (5 + \frac{1}{k - 1}) - \varepsilon$, for every $k \geqs 3$, as stated formally below. For every $k \geqs 4$, our integrality gap improves on the integrality gap of $8 / (7 + \frac{1}{k - 1})$ of Freund and Karloff~\cite{FK00}. When $k = 3$, our gap matches the known results of Karger et al.~\cite{KKSTY04} and Cunningham and Tang~\cite{CT99}. 

\begin{theorem} \label{thm:main}
For every $k \geqs 3$ and every $\varepsilon > 0$, there exists an instance $\cI_{k, \varepsilon}$ of Multiway Cut with $k$ terminals such that the integrality gap of the CKR relaxation for $\cI_{k, \varepsilon}$ is at least $6 / (5 + \frac{1}{k - 1}) - \varepsilon$.
\end{theorem}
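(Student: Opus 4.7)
The plan is to construct an explicit family of weighted instances $\cI_{k,\varepsilon}$, exhibit a low-cost feasible LP solution, and then give a matching combinatorial lower bound on the value of any integral multiway cut. Following the geometric paradigm that drove the $k=3$ analysis of Karger et al.\ and Cunningham--Tang, I would identify each vertex $u$ of the instance with a point $x^u \in \Delta_k$, so that the embedding itself \emph{is} the LP solution and the LP cost is just $\sum_{(u,v) \in E} w(u,v) \cdot \tfrac{1}{2}\|x^u - x^v\|_1$. The terminals sit at the simplex vertices $e^1,\ldots,e^k$, while non-terminal vertices are placed on a fully $S_k$-symmetric grid on the low-dimensional skeleton of $\Delta_k$; the parameter $\varepsilon$ controls the grid spacing and drops out in the limit. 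Thanks to the symmetry, the LP cost collapses to a short sum over orbit representatives and, after normalization, should equal $5 + \tfrac{1}{k-1}$, matching the denominator in the target ratio; this first step is essentially a direct computation.

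\textbf{Integer lower bound.} The heart of the proof is to show that every partition $(S_1,\ldots,S_k)$ with $t_i \in S_i$ cuts edges of total weight at least $6$ under the same normalization. My plan is: (i) average the cost over the $k!$ permutations of terminal labels, so that it suffices to lower-bound the expected cost of a random relabeled cut; (ii) apply a local exchange / swapping argument to reduce to \emph{canonical} cuts, namely those whose restriction to each symmetry orbit of non-terminal vertices is given by a simple monotone rule depending on a few parameters; (iii) finish with a finite case analysis parameterized by how the cut slices each low-dimensional face of $\Delta_k$. If the gadget is designed so that its cost decomposes as a sum of independent contributions over pairs (or small tuples) of terminals, this should reduce to a $k=3$-type computation on each $2$-face of the simplex, thereby explaining the perfect match with the known tight example in that case.

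\textbf{Main obstacle.} As was already the case for $k=3$, step (iii) is where essentially all the work lies. For general $k$ the space of candidate canonical cuts has a richer combinatorial structure than in the triangle case, so the real difficulty is to design the gadget and the edge weights so that (a) the symmetrization and exchange arguments in (i) and (ii) are valid without loss, and (b) the resulting constrained minimization is simple enough to solve by hand and yields exactly the ratio $6/(5+\tfrac{1}{k-1})$. Striking the right balance between these requirements, and in particular matching the $k=3$ tight example while still generalizing cleanly to arbitrary $k$ and beating the Freund--Karloff denominator $7 + \tfrac{1}{k-1}$, is what I expect to be the most delicate technical point.
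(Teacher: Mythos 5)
Your high-level strategy---place the gadget on a grid in $\Delta_k$ so the embedding is the LP solution, then reduce the integer lower bound to a $k=3$ analysis on $2$-faces---is the right skeleton, but the crucial structural idea is missing and without it the plan cannot reach the stated bound. You say the $2$-face reduction should ``explain the perfect match with the known tight example'' in the triangle case; but the target ratio $6/(5+\tfrac{1}{k-1})$ exceeds $12/11$ for every $k>3$, so it strictly \emph{beats} the known tight $k=3$ example. A reduction that simply restricts each cut to a $2$-face and compares against arbitrary $3$-way cuts of the triangle is therefore doomed: it can never certify more than $12/11$. What makes the improvement possible is that when you restrict a $k$-way cut of $\Delta_k$ to the $2$-face spanned by $e^{i_1},e^{i_2},e^{i_3}$, you obtain not an arbitrary $3$-way cut but a cut into $4$ parts in which any boundary point $x$ of the triangle must go to a part in $\supp(x)\cup\{\text{extra}\}$; the paper calls these \emph{non-opposite cuts}. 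Moreover, if $i_1,i_2,i_3$ are chosen at random, the restriction is a non-opposite cut with probability $1-O(n/k)$. Non-opposite cuts are a strictly more constrained class, and against them one can certify a $6/5$ lower bound on the triangle (via a characterization into ``ball cuts'' and ``$3$-corner cuts'' and a new explicit weighting with a potential-function argument), which is how the ratio climbs past $12/11$ as $k$ grows.

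There are two further pieces your outline does not supply. First, the $6/5-O(1/n)$ gap against non-opposite cuts of $\Delta_{3,n}$ is itself a new construction---you cannot reuse the Karger et al.\ or Cunningham--Tang instance, which only defeats ordinary $3$-way cuts. Second, averaging the triangle gadget over all $\binom{k}{3}$ faces only yields $6/5 - O(n/k)$ after accounting for the small probability that the restriction is not non-opposite; to recover the precise $6/(5+\tfrac{1}{k-1})$ one has to combine this ``face gadget'' with a second, simple gadget that places all weight on the $1$-skeleton edges and whose integral cost grows with the average number of distinct labels appearing along a terminal--terminal line. Taking a $\left(\tfrac{k-2}{k-1},\tfrac{1}{k-1}\right)$ mixture of the two gadgets makes the error terms cancel exactly. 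Your proposal's steps (ii) and (iii) (exchange argument plus finite case analysis) are a plausible placeholder for the triangle-cut characterization, but they do not engage with the non-opposite condition, and absent that the bound you would prove tops out at $12/11 - \varepsilon$ rather than $6/(5+\tfrac{1}{k-1})-\varepsilon$.
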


As mentioned earlier, Manokaran \etals showed that integrality gaps for the CKR relaxation can be translated directly to a Unique Games hardness of approximation~\cite{MNRS08}. Hence, the following corollary holds as an immediate consequence of our integrality gap.

\begin{corollary}
Assuming the Unique Games Conjecture, for every $k \geqs 3$ and every $\varepsilon > 0$, it is NP-hard to approximate Multiway Cut with $k$ terminals to within a factor of $6 / (5 + \frac{1}{k - 1}) - \varepsilon$ of the optimum.
\end{corollary}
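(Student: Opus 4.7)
The plan is to obtain the corollary as a direct instantiation of the Manokaran--Naor--Raghavendra--Schwartz reduction~\cite{MNRS08} applied to the integrality gap instance produced by Theorem~\ref{thm:main}. Recall that the MNRS theorem, which is the only external input needed, provides for every constant $\delta > 0$ a polynomial-time reduction from Unique Games with $k$ labels to Multiway Cut with $k$ terminals such that, given any fixed instance of Multiway Cut with CKR integrality gap $\tau$, YES instances of Unique Games are mapped to Multiway Cut instances admitting a $k$-way cut of some cost $C$, while NO instances are mapped to instances in which every $k$-way cut costs at least $(\tau - \delta)\,C$. Consequently any polynomial-time $(\tau - \delta)$-approximation algorithm for Multiway Cut would distinguish the two cases and thereby refute the UGC.

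Given $k \geqs 3$ and $\varepsilon > 0$ as in the corollary, I would carry out the deduction in two steps. First, apply Theorem~\ref{thm:main} with parameter $\varepsilon/2$ to obtain an explicit instance $\cI_{k,\varepsilon/2}$ whose CKR integrality ratio is at least
\[
\tau \;:=\; \frac{6}{5 + \tfrac{1}{k-1}} - \frac{\varepsilon}{2}.
\]
Second, plug $\cI_{k,\varepsilon/2}$ as the ``gap gadget'' into the MNRS reduction with slack parameter $\delta := \varepsilon/2$. Composing the two bounds yields a UGC-hardness threshold of
\[
\tau - \delta \;\geqs\; \frac{6}{5 + \tfrac{1}{k-1}} - \varepsilon,
\]
which is exactly the ratio claimed in the corollary. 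The choice to split $\varepsilon$ evenly between the gap of Theorem~\ref{thm:main} and the MNRS slack is arbitrary; any partition would do.

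Since the corollary is explicitly flagged as ``an immediate consequence'' in the excerpt, there is essentially no genuine obstacle beyond this parameter bookkeeping: the MNRS reduction is used entirely as a black box that quantitatively transfers any CKR integrality gap into UGC-hardness of the same ratio up to arbitrarily small additive loss. The conceptual content of the corollary is therefore entirely absorbed by Theorem~\ref{thm:main}; no additional analysis of the reduction, of the dictatorship-test gadget, or of noise-stability tools is required here. The only point one should verify is that Theorem~\ref{thm:main} produces a \emph{fixed, finite} instance (rather than a family growing with the Unique Games instance), since MNRS requires the gap gadget to be of constant size relative to the instance size fed to the reduction; this holds directly from the construction promised by Theorem~\ref{thm:main}.
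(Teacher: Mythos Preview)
Your proposal is correct and matches the paper's own treatment: the paper does not give a separate proof but simply states the corollary as ``an immediate consequence'' of Theorem~\ref{thm:main} together with the MNRS reduction~\cite{MNRS08}, which is precisely the black-box combination you spell out. Your parameter-splitting is a fine way to make the ``immediate'' explicit.
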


\subsubsection{Techniques}

To see the motivation behind our construction, it is best to first gain additional intuition for the geometry of the CKR relaxation. For brevity, we will not be thoroughly formal in this section. 

\paragraph{Geometric Interpretation of the CKR Relaxation.} A solution of the CKR relaxation embeds the graph into a simplex; each vertex $u$ becomes a point $x^u \in \Delta_k$, while each edge $(u, v)$ becomes a segment $(x^u, x^v)$. As a result, to construct a randomized rounding scheme for the relaxation, it is sufficient to define a randomized $k$-partitioning scheme of the simplex. More specifically, this is a distribution $\cP$ on $k$-way cuts of $\Delta_k$ (i.e. $\forall P \in \cP$, $P: \Delta_k \rightarrow [k]$ such that $P(e^i) = i$ for every $i \in [k]$) and its performance is measured by its \emph{maximum density}, which is defined as $$\tau_k(\cP) := \sup_{x \ne y \in \Delta_k} \frac{\Pr_{P \sim \cP}[P(x) \ne P(y)]}{\frac{1}{2}\|x - y\|_1}.$$ In words, for any two different points $x, y \in \Delta_k$, a random $k$-way cut $P$ sampled from $\cP$ assigns the two points to different clusters with probability at most $\tau_k(\cP) \cdot \frac{1}{2} \|x - y\|_1$. This immediately yields the following randomized rounding scheme for the CKR relaxation: pick a random $P \sim \cP$ and place each $u \in V$ in cluster $P(x^u)$. From the definition of maximum density, the expected contribution of each edge $e = (u, v)$ in the rounded solution is at most $\tau_k(\cP) \cdot w(e) \cdot \frac{1}{2} \|x^u - x^v\|_1$, meaning that this rounding algorithm indeed gives a $\tau_k(\cP)$-approximation for Multiway Cut with $k$ terminals.

The above observation was implicit in~\cite{CKR00} and was first made explicit by Karger \etal~\cite{KKSTY04}, who also proved the opposite side of the inequality: for every $\varepsilon > 0$, there is an instance $\cI$ of Multiway Cut with $k$ terminals whose integrality gap is at least $\tau_k^* - \varepsilon$, where $\tau_k^*$ is the minimum\footnote{In~\cite{KKSTY04}, $\tau_k^*$ is defined as the infimum of $\tau_k(\cP)$ among all $\cP$'s but it was proved in the same work that there exists $\cP$ that achieves the infimum.} of $\tau_k(\cP)$ among all $\cP$'s. In other words, constructing an integrality gap for Multiway Cut is equivalent to proving a lower bound for $\tau_k^*$.

\paragraph{Non-Opposite Cuts.} Let us now consider a special class of partitions of the simplex, which we call \emph{non-opposite cuts}. A non-opposite cut of $\Delta_k$ is a function $P: \Delta_k \rightarrow [k + 1]$ such that $P(e^i) = i$ for every $i \in [k]$ and, for every $x \in \Delta_k$, $P(x) \in \supp(x) \cup \{k + 1\}$, where $\supp(x) := \{i \in [k] \mid x_i \ne 0\}$ is the set of all non-zero coordinates of $x$. Note here that $P$ now partitions $\Delta_k$ into $k + 1$ parts and the additional condition requires that a point that lies on any face of the simplex is either assigned to the $(k + 1)$-th cluster or to a cluster corresponding to a vertex of the simplex that belongs to this face. The representative case to keep in mind is when $k = 3$; in this case, a non-opposite cut partitions the triangle $\Delta_3$ into four parts and any point on the border of $\Delta_3$ is not assigned to the cluster corresponding to the opposite vertex of the triangle. Figure~\ref{fig:non-opposite} demonstrates the difference between non-opposite cuts and normal $3$-way cuts of $\Delta_3$.

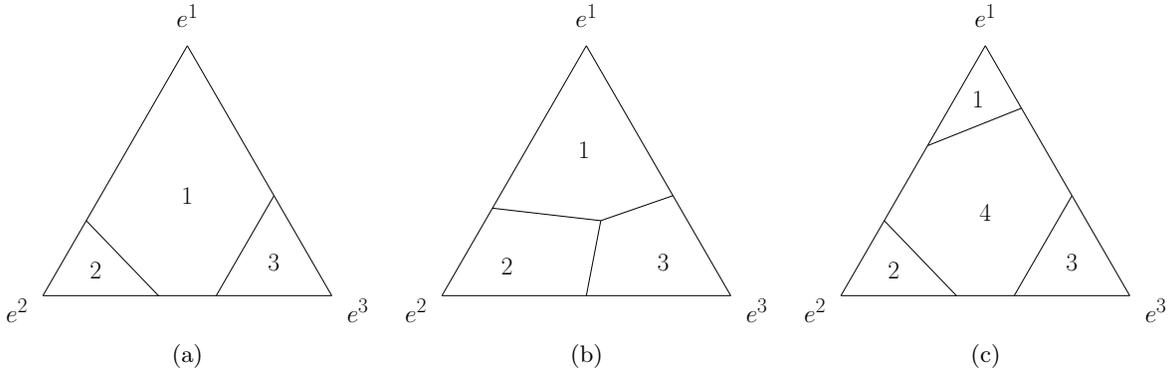
\begin{figure}[h!]
    \centering
    \begin{subfigure}[b]{0.3\textwidth}
        \resizebox{\textwidth}{!}{\begin{tikzpicture}
\draw [color=black] (5.0000000000, -2.8867513459) -- (-5.0000000000, -2.8867513459);
\draw [color=black] (5.0000000000, -2.8867513459) -- (0.0000000000, 5.7735026919);
\draw [color=black] (-5.0000000000, -2.8867513459) -- (0.0000000000, 5.7735026919);
\node at (0.0,6.81273317644) {\Huge $e^1$};
\node at (-5.9,-3.40636658822) {\Huge $e^2$};
\node at (5.9,-3.40636658822) {\Huge $e^3$};
\draw [color=black] (-3.5000000000, -0.2886751346) -- (-1.0000000000, -2.8867513459);
\draw [color=black] (3.0000000000, 0.5773502692) -- (1.0000000000, -2.8867513459);
\node at (-3.16666666667,-2.02072594216) {\Huge 2};
\node at (3.0,-1.73205080757) {\Huge 3};
\node at (0.0,0.57735026919) {\Huge 1};
\end{tikzpicture}}
        \caption{}
        \label{fig:op}
    \end{subfigure}
    ~ 
    \begin{subfigure}[b]{0.3\textwidth}
    	\resizebox{\textwidth}{!}{\begin{tikzpicture}
\draw [color=black] (5.0000000000, -2.8867513459) -- (-5.0000000000, -2.8867513459);
\draw [color=black] (5.0000000000, -2.8867513459) -- (0.0000000000, 5.7735026919);
\draw [color=black] (-5.0000000000, -2.8867513459) -- (0.0000000000, 5.7735026919);
\node at (0.0,6.81273317644) {\Huge $e^1$};
\node at (-5.9,-3.40636658822) {\Huge $e^2$};
\node at (5.9,-3.40636658822) {\Huge $e^3$};
\draw [color=black] (-3.2500000000, 0.1443375673) -- (0.5000000000, -0.2886751346);
\draw [color=black] (3.0000000000, 0.5773502692) -- (0.5000000000, -0.2886751346);
\draw [color=black] (0.0000000000, -2.8867513459) -- (0.5000000000, -0.2886751346);
\node at (-0.0833333333333,2.16506350946) {\Huge 1};
\node at (-2.75,-1.87638837487) {\Huge 2};
\node at (2.66666666667,-1.73205080757) {\Huge 3};
\end{tikzpicture}}
    	\caption{}
    	\label{fig:non-op-ball}
    \end{subfigure}
    ~
    \begin{subfigure}[b]{0.3\textwidth}
    	\resizebox{\textwidth}{!}{\begin{tikzpicture}
\draw [color=black] (5.0000000000, -2.8867513459) -- (-5.0000000000, -2.8867513459);
\draw [color=black] (5.0000000000, -2.8867513459) -- (0.0000000000, 5.7735026919);
\draw [color=black] (-5.0000000000, -2.8867513459) -- (0.0000000000, 5.7735026919);
\node at (0.0,6.81273317644) {\Huge $e^1$};
\node at (-5.9,-3.40636658822) {\Huge $e^2$};
\node at (5.9,-3.40636658822) {\Huge $e^3$};
\draw [color=black] (-3.5000000000, -0.2886751346) -- (-1.0000000000, -2.8867513459);
\draw [color=black] (3.0000000000, 0.5773502692) -- (1.0000000000, -2.8867513459);
\draw [color=black] (-2.0000000000, 2.3094010768) -- (1.2500000000, 3.6084391824);
\node at (-0.25,3.89711431703) {\Huge 1};
\node at (-3.16666666667,-2.02072594216) {\Huge 2};
\node at (3.0,-1.73205080757) {\Huge 3};
\node at (0,0) {\Huge 4};
\end{tikzpicture}}
    	\caption{}
    	\label{fig:non-op-corner}
    \end{subfigure}
    \caption{Illustrations of cuts of $\Delta_3$. The 3-way cut in Figure~\ref{fig:op} is not non-opposite since some points on the line between $e^2$ and $e^3$ belong to the cluster corresponding to $e^1$. On the other hand, the 3-way cut in Figure~\ref{fig:non-op-ball} is non-opposite. Finally, Figure~\ref{fig:non-op-corner} demonstrates an example of a non-opposite cut that is not a 3-way cut because the fourth cluster is not empty.}
    \label{fig:non-opposite}
\end{figure}

We again define the maximum density $\tau_k$ for a distribution on non-opposite cuts similarly. Additionally, let $\tilde \tau^*_k$ be the infimum of $\tau_k(\cP)$ among all distributions $\cP$ on non-opposite cuts of $\Delta_k$. Observe first that $\tau^*_k \leqs \ttau^*_k$ because each non-opposite cut $P$ of $\Delta_k$ can be turned into a $k$-way cut without separating additional pairs of points by merging the $(k + 1)$-th cluster into the first cluster. This means that it will be easier for us to show a lower bound for $\tilde \tau^*_k$. Unfortunately, it is so far unclear whether such lower bound implies any lower bound for $\tau^*_k$. However, we will see in a moment that we can define an approximate version of $\tilde \tau^*_k$ that will give us a lower bound of $\tau^*_K$ for all $K > k$ as well. 

The approximation of $\tilde \tau^*_k$ we use is the one obtained by the discretization of the problem in which we restrict ourselves to the triangular grid $\Delta_{k, n} := \{x \in \Delta_k \mid \forall i \in [k],\; x_i \text{ is a multiple of } 1/n\}$ instead of $\Delta_k$. For every distribution $\cP$ on non-opposite cuts of $\Delta_k$, we can similarly define $\tau_{k, n}(P)$ as $$\tau_{k, n}(\cP) := \max_{x \ne y \in \Delta_{k, n}} \frac{\Pr_{P \sim \cP}[P(x) \ne P(y)]}{\frac{1}{2}\|x - y\|_1}$$ and $\tilde \tau^*_{k, n}$ as the minimum\footnote{The minimum exists since there are only finite number of cuts when restricted to the discretization $\Delta_{k, n}$ of the simplex.} of $\tau_{k, n}(\cP)$ among all such distributions $\cP$'s on non-opposite cuts. The simple observation that allows us to construct our integrality gap is the following relation between $\tau^*_K$ and $\tilde \tau^*_{k, n}$: for every $K > k$, $\tilde \tau^*_{k, n} - O(kn/(K - k)) \leqs \tau^*_K$. In fact, we only use this inequality when $k = 3$ to construct our gap and, hence, we only sketch the proof of this case here. For completeness, we prove the general case of the inequality in Appendix~\ref{app:tk}.

Before we sketch the proof of this inequality, let us note that Freund and Karloff's result~\cite{FK00} can be seen as a proof of $\tilde \tau_{3, 2}^* \geqs 8/7$; please refer to Appendix~\ref{app:FK} for a more detailed explanation. Note that $\tilde \tau_{3, 2}^* \geqs 8/7$, together with the above fact, immediately implies a lower bound of $\tau^*_K \geqs 8/7 - O(1/K)$. Barring the dependency on $K$, this is exactly the gap proven in~\cite{FK00}. With this interpretation in mind, the rest of our work can be mostly seen as proving that $\tilde \tau_{3,n}^* \geqs 6/5 - O(1/n)$. By selecting $n = \Theta(\sqrt{K})$, this implies a lower bound of $6/5 - O(1/\sqrt{K})$ for $\tau^*_K$; more care can then be taken to get the right dependency on $K$.

We now sketch the proof of $\tilde \tau^*_{3, n} - O(n/K) \leqs \tau^*_K$. Suppose that $\cP$ is the distribution on $K$-way cuts such that $\tau_K(\cP) = \tau^*_K$. We sample a non-opposite cut $\tilde P$ of $\Delta_{3, n}$ as follows. First, sample $P \sim \cP$. Then, randomly select three different indices $i_1, i_2, i_3$ from $[K]$. Let $P_{\{i_1, i_2, i_3\}}: \Delta_3 \rightarrow [4]$ be the cut induced by $P$ on the face induced by the vertices $i_1, i_2, i_3$. Specifically, let $f(i_j) = j$ for every $j \in [3]$. We define $P_{\{i_1, i_2, i_3\}}$ by
\begin{align*}
P_{\{i_1, i_2, i_3\}}(x) =
\begin{cases}
f(P(x_1e^{i_1} + x_2e^{i_2} + x_3e^{i_3})) & \text{ if } P(x_1e^{i_1} + x_2e^{i_2} + x_3e^{i_3}) \in \{i_1, i_2, i_3\}, \\
4 & \text{ otherwise.}
\end{cases}
\end{align*}

If $P_{\{i_1, i_2, i_3\}}$ is a non-opposite cut, then let $\tilde{P} = P_{\{i_1, i_2, i_3\}}$. Otherwise, let $\tilde P(x) = P_{\{i_1, i_2, i_3\}}(x)$ for every $x$ such that $P_{\{i_1, i_2, i_3\}}(x) \in \supp(x) \cup \{4\}$. For the other $x$'s, we simply set $\tilde P(x) = 4$; we will call these $x$'s ``bad''. It is obvious that $\tilde{P}$ is a non-opposite cut. Moreover, if any two points $x, y \in \Delta_{3, n}$ are separated in $\tilde P$, then either $P_{\{i_1, i_2, i_3\}}(x) \ne P_{\{i_1, i_2, i_3\}}(y)$ or exactly one of $x, y$ is bad. From the definition of $P_{\{i_1, i_2, i_3\}}$, the former implies $P(x_1e^{i_1} + x_2e^{i_2} + x_3e^{i_3}) \ne P(y_1e^{i_1} + y_2e^{i_2} + y_3e^{i_3})$, which happens with probability at most $\tau^*_K \cdot \frac{1}{2} \|x - y\|_1$. Hence, we are left to show that the probability that any point $x \in \Delta_{3, n}$ is bad is at most $O(1/K)$; this immediately yields the intended bound since $\|x - y\|_1 \geqs 2/n$ for every $x \ne y \in \Delta_{3, n}$.

Recall that a point $x$ is bad iff $P_{\{i_1, i_2, i_3\}}(x) \in [3] \setminus \supp(x)$, which can only happen when $|\supp(x)| = 2$. Assume without loss of generality that $\supp(x) = \{1, 2\}$, i.e. $x_3 = 0$. We want to bound the probability that $P_{\{i_1, i_2, i_3\}}(x) = 3$. Fix the cut $P$ and the choice of $i_1, i_2$. Since $x_3 = 0$, this already determines $P(x_1e^{i_1} + x_2e^{i_2} + x_3e^{i_3})$. From the definition of $P_{\{i_1, i_2, i_3\}}$, $P_{\{i_1, i_2, i_3\}}(x) = 3$ if and only if $i_3 = P(x_1e^{i_1} + x_2e^{i_2} + x_3e^{i_3})$. Since $i_3$ is a random element of $[K] \setminus \{i_1, i_2\}$, this happens with probability at most $1 / (K - 2)$ and we have completed our proof sketch.

\paragraph{Lower Bound for $\tilde \tau_{3, n}^*$.} The final component in our construction is a lower bound for $\ttau_{3, n}^*$. To lower bound $\ttau_{3, n}^*$, it is enough to construct a weighted undirected graph with vertices $\Delta_{3, n}$ such that the CKR relaxation value (with respect to the solution in which each vertex is assigned the vector representation of itself) is small and every non-opposite cut has large cost. In other words, we need to construct an integrality gap for Multiway Cut with 3 terminals but only against non-opposite cuts, instead of against all 3-way cuts.

We will set the edge set $E_{3, n}$ to be all two consecutive points on grid lines, i.e., $E_{3, n} = \{(x, y) \mid x,y \in \Delta_{3, n}, \|x - y\|_1 = 2/n\}$. Fortunately for us, Karger \etal~\cite{KKSTY04} gave an elegant characterization of 3-way cuts of the graph $(\Delta_{3, n}, E_{3, n})$, which ultimately led to their $12/11 - \varepsilon$ ratio integrality gap for Multiway Cut with 3 terminals. By extending this, we arrive at a similar characterization for all non-opposite cuts of the graph, which also eventually enables us to construct the integrality gap against such cuts.

\subsection{Organization of the paper}

The rest of the paper is organized as follows. We start by formally defining our notation in Section~\ref{sec:prelim}. In Section~\ref{sec:3}, we prove the lower bound of $6/5 - O(1/n)$ on $\ttau_{3, n}$. In Section~\ref{sec:rootk}, we use this to construct an integrality gap of ratio $6/5 - O(1/\sqrt{k})$, which essentially contains all of our main ideas. Then, in Section~\ref{sec:k}, we show how to nail down the dependency on $k$ and achieve the integrality ratio of $6 / (5 + \frac{1}{k - 1}) - \varepsilon$, completing the proof of our main result. Finally, we provide some discussion regarding our result and open questions in Section~\ref{sec:open}.

\section{Preliminaries and Notation} \label{sec:prelim}

From now on, our graphs will have vertex sets being $\Delta_{k,n} \subseteq \Delta_k$ for some $k, n$, and the edge sets being $E_{k, n} := \{(x, y) \mid  x,y \in \Delta_{k,n}, \|x - y\|_1 = 2/n\}$. The terminals are naturally associated with the $k$ vertices of the simplex. To avoid confusion, we reserve the word \emph{vertices} for the simplex vertices and we instead refer to the vertices of such graphs as \emph{points}.

We refer to the LP solution in which each point is assigned to itself as the \emph{canonical LP solution} and we use $\lpc(w)$ to denote its value with respect to a weight function $w: E_{k, n} \rightarrow \Rz$, i.e., $$\lpc(w) = \sum_{(x, y) \in E_{k, n}} w(x, y) \cdot \frac{1}{2} \|x - y\|_1 = \frac{1}{n} \sum_{(x, y) \in E_{k, n}} w(x, y),$$ where the latter equality comes from the fact that $\|x - y\|_1 = 2/n$ for all $(x, y) \in E_{k, n}$. Moreover, for every $k$-way cut $P$, we denote its cost with respect to $w$ by $\cost(P, w)$; in other words, $$\cost(P, w) = \sum_{(x, y) \in E_{k, n}} w(x, y) \cdot \ind[P(x) \ne P(y)],$$ where $\ind[P(x) \ne P(y)]$ is the indicator variable of the event $P(x) \ne P(y)$.

\section{A Lower Bound on $\ttau_{3, n}^*$} \label{sec:3}

In this section, we will construct an integrality gap on the graph $(\Delta_{3, n}, E_{3, n})$ such that its CKR LP value is small but every non-opposite cut has large cost. As our graph is already defined, we only need to define the weights of the edges. The main result of this section can be stated as follows.

\begin{lemma} \label{lem:3sim}
For every $n$ divisible by 3, there exists $w: E_{3, n} \rightarrow \Rz$ such that
\begin{itemize}
\item The value of the canonical LP solution of $(\Delta_{3, n}, E_{3, n}, w)$ is at most $5/6 + O(1/n)$.
\item The cost of every non-opposite cut of $(\Delta_{3, n}, E_{3, n}, w)$ is at least one.
\end{itemize}
\end{lemma}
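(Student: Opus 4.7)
The plan is to construct a weight function $w : E_{3,n} \to \Rz$ for which the canonical LP value $\lpc(w) = \tfrac{1}{n} \sum_{(x,y) \in E_{3,n}} w(x,y)$ is at most $5/6 + O(1/n)$, while every non-opposite cut $P : \Delta_{3,n} \to [4]$ has $\cost(P, w) \geqs 1$. Since a $3$-way cut is (in particular) a non-opposite cut, the bound being claimed is strictly stronger than the $12/11$ gap of Karger \etal, and so $w$ must be chosen to exploit the additional structure imposed by the non-opposite constraint, namely that the label $4$ is forbidden on each face of the simplex except at points whose vector representation is already supported on that face.

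I would carry this out in three stages. In Stage~1, I derive a combinatorial characterization of non-opposite cuts on $(\Delta_{3,n}, E_{3,n})$. The natural starting point is the $3$-way cut characterization of Karger \etal, which expresses, up to $O(1/n)$ error, the cost of a cut as a weighted sum of label-change counts along the three families of grid lines that traverse the triangle. To adapt this to non-opposite cuts, I would argue that each side of the triangle permits only the two ``face'' labels and the ``extra'' label $4$, and that every transition between two face labels in the interior must topologically ``separate'' one terminal from another. I expect the characterization to take the form $\cost(P,w) \geqs \sum_\ell \alpha_\ell c_\ell(P)$, where $\ell$ ranges over a family of grid lines and the $c_\ell(P)$ are nonnegative integers subject to linear constraints forced by the cut patterns along the three sides.

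In Stage~2, I would exploit the $S_3$ symmetry of the simplex to restrict to weight functions that depend only on the direction of the edge and perhaps on its distance from the boundary, reducing the construction to a finite linear program: maximize the guaranteed lower bound on $\cost(P, w)$ from Stage~1 subject to $\lpc(w) = 1$. In Stage~3, I would guess a concrete weight profile, likely one with only a couple of distinct values given how clean the target ratio $6/5$ is, and verify directly that $\lpc(w) \leqs 5/6 + O(1/n)$ and that the Stage~1 bound evaluates to at least $1$ on every extremal non-opposite cut pattern. The divisibility hypothesis $3 \mid n$ strongly suggests that the extremal construction carries some threefold symmetric boundary geometry, such as ``triangular'' inner regions of side length $n/3$ centered at the barycenter, which should guide both the weight assignment and the tight cut patterns.

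The main obstacle will be Stage~1. The Karger \etal\ characterization leverages the fact that with exactly three labels each cluster is globally controlled by its relationship to the three terminals. Introducing a fourth label at arbitrary interior points disrupts this structure: the $4$-region can surround a terminal, disconnect another cluster into many components, or appear in many disjoint pieces, and we nevertheless need to account for the full cost contributed by its boundary. Taming these configurations tightly enough to produce a bound that gives $5/6$, rather than the weaker $11/12$ one gets trivially from the $3$-way cut case, is the delicate heart of the argument and the place where I would concentrate the bulk of the effort.
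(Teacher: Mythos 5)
Your high-level plan matches the paper's in outline: first characterize the candidate optimal non-opposite cuts via the planar dual, then use that characterization to formulate and solve a linear program over symmetric weight functions, and finally write down an explicit weight profile and verify it. The paper's authors say explicitly that this is how they found the gap instance. However, the proposal leaves a genuine gap at exactly the step you flag as the ``delicate heart'' — Stage~1 — and your proposed way through it would not lead cleanly to the argument the paper actually uses.

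Concretely, the paper's Observation~\ref{obs:char-nonop} resolves Stage~1 by a short local-transformation argument, not by the kind of linear lower bound $\cost(P,w) \geqs \sum_\ell \alpha_\ell c_\ell(P)$ you envision. Given any non-opposite cut $P$, consider the subgraph $G_P$ after removing the cut edges. If a component labeled $4$ does not meet all three sides of the triangle, it can be relabeled to one of $\{1,2,3\}$ without violating non-oppositeness; and any component labeled $i \in [3]$ not containing $e^i$ can be merged into a neighboring component. Neither step adds new cut edges, and the process terminates. The result is that an optimal non-opposite cut can always be taken to be either a ball cut (three disjoint dual paths from a common triangular face to $O_1, O_2, O_3$) or a $3$-corner cut (three disjoint dual paths among $O_1, O_2, O_3$). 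Your worry that the $4$-region ``can surround a terminal, disconnect another cluster into many components, or appear in many disjoint pieces'' is precisely what this argument dissolves: all such configurations reduce to the two canonical shapes. You would be spending effort on a harder version of a problem the paper shows is easy.

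Second, even granted the characterization, you give no indication of how Stage~3 would proceed. The paper's weight function is not what you guess: the triangle is split into three corner triangles $T_i = \{x : x_i \geqs 2/3\}$ and a central hexagon $H$; edges inside $H$ and non-border edges in $T_i$ that are not parallel to the side opposite $e^i$ get weight $\rho = 1/(2n)$, edges in $T_i$ parallel to the opposite side get weight $0$, and border edges near $e^i$ get weights ramping up linearly from $\rho$ at the hexagon boundary to $(n/3)\rho$ at $e^i$. The verification then proceeds by a potential function $\Phi_i$ defined on the dual faces, which lower bounds dual shortest-path lengths from $O_i$ and for which $\Phi_1(F) + \Phi_2(F) + \Phi_3(F) \geqs 1$ at every triangular face $F$ (giving the ball-cut bound), and each pair $O_i, O_j$ is at dual distance at least $1/3$ (giving the $3$-corner-cut bound). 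Nothing resembling this potential argument appears in your proposal, and without it or a substitute, Stage~3 is an unverified guess. So while the scaffolding is right, the two load-bearing ideas — the ball/corner reduction and the potential function — are missing.
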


To prove Lemma~\ref{lem:3sim}, we will first characterize candidate optimal non-opposite cuts.

\subsection{A Characterization of Non-Opposite Cuts of $\Delta_{3, n}$}

To characterize the 3-way cuts, Karger \etal~\cite{KKSTY04} consider the augmented version of $(\Delta_{3, n}, E_{3, n})$, in which each vertex of the simplex has an edge heading out infinitely. They then look at the dual graph of this graph. The augmentation creates three outer faces; the vertices in the dual graph corresponding to these faces are named $O_1, O_2$ and $O_3$ where $O_1, O_2$ and $O_3$ are opposite to $e^1, e^2$ and $e^3$ respectively. For convenience, we disregard the edges among $O_1, O_2, O_3$ in the dual. Figure~\ref{fig:primal-dual} contains illustrations of the augmented graph and its planar dual. The figure is a recreation of Figure 1 and Figure 2 in~\cite{KKSTY04}.

\begin{figure}[h!]
    \centering
    \begin{subfigure}[b]{0.4\textwidth}
        \resizebox{\textwidth}{!}{\begin{tikzpicture}
\draw [color=black] (5.0000000000, -2.8867513459) -- (3.3333333333, -2.8867513459);
\draw [color=black] (5.0000000000, -2.8867513459) -- (4.1666666667, -1.4433756730);
\draw [color=black] (3.3333333333, -2.8867513459) -- (1.6666666667, -2.8867513459);
\draw [color=black] (3.3333333333, -2.8867513459) -- (4.1666666667, -1.4433756730);
\draw [color=black] (3.3333333333, -2.8867513459) -- (2.5000000000, -1.4433756730);
\draw [color=black] (1.6666666667, -2.8867513459) -- (0.0000000000, -2.8867513459);
\draw [color=black] (1.6666666667, -2.8867513459) -- (2.5000000000, -1.4433756730);
\draw [color=black] (1.6666666667, -2.8867513459) -- (0.8333333333, -1.4433756730);
\draw [color=black] (0.0000000000, -2.8867513459) -- (-1.6666666667, -2.8867513459);
\draw [color=black] (0.0000000000, -2.8867513459) -- (0.8333333333, -1.4433756730);
\draw [color=black] (0.0000000000, -2.8867513459) -- (-0.8333333333, -1.4433756730);
\draw [color=black] (-1.6666666667, -2.8867513459) -- (-3.3333333333, -2.8867513459);
\draw [color=black] (-1.6666666667, -2.8867513459) -- (-0.8333333333, -1.4433756730);
\draw [color=black] (-1.6666666667, -2.8867513459) -- (-2.5000000000, -1.4433756730);
\draw [color=black] (-3.3333333333, -2.8867513459) -- (-5.0000000000, -2.8867513459);
\draw [color=black] (-3.3333333333, -2.8867513459) -- (-2.5000000000, -1.4433756730);
\draw [color=black] (-3.3333333333, -2.8867513459) -- (-4.1666666667, -1.4433756730);
\draw [color=black] (-5.0000000000, -2.8867513459) -- (-4.1666666667, -1.4433756730);
\draw [color=black] (4.1666666667, -1.4433756730) -- (2.5000000000, -1.4433756730);
\draw [color=black] (4.1666666667, -1.4433756730) -- (3.3333333333, -0.0000000000);
\draw [color=black] (2.5000000000, -1.4433756730) -- (0.8333333333, -1.4433756730);
\draw [color=black] (2.5000000000, -1.4433756730) -- (3.3333333333, -0.0000000000);
\draw [color=black] (2.5000000000, -1.4433756730) -- (1.6666666667, -0.0000000000);
\draw [color=black] (0.8333333333, -1.4433756730) -- (-0.8333333333, -1.4433756730);
\draw [color=black] (0.8333333333, -1.4433756730) -- (1.6666666667, -0.0000000000);
\draw [color=black] (0.8333333333, -1.4433756730) -- (0.0000000000, -0.0000000000);
\draw [color=black] (-0.8333333333, -1.4433756730) -- (-2.5000000000, -1.4433756730);
\draw [color=black] (-0.8333333333, -1.4433756730) -- (0.0000000000, -0.0000000000);
\draw [color=black] (-0.8333333333, -1.4433756730) -- (-1.6666666667, -0.0000000000);
\draw [color=black] (-2.5000000000, -1.4433756730) -- (-4.1666666667, -1.4433756730);
\draw [color=black] (-2.5000000000, -1.4433756730) -- (-1.6666666667, -0.0000000000);
\draw [color=black] (-2.5000000000, -1.4433756730) -- (-3.3333333333, -0.0000000000);
\draw [color=black] (-4.1666666667, -1.4433756730) -- (-3.3333333333, -0.0000000000);
\draw [color=black] (3.3333333333, -0.0000000000) -- (1.6666666667, -0.0000000000);
\draw [color=black] (3.3333333333, -0.0000000000) -- (2.5000000000, 1.4433756730);
\draw [color=black] (1.6666666667, -0.0000000000) -- (0.0000000000, -0.0000000000);
\draw [color=black] (1.6666666667, -0.0000000000) -- (2.5000000000, 1.4433756730);
\draw [color=black] (1.6666666667, -0.0000000000) -- (0.8333333333, 1.4433756730);
\draw [color=black] (0.0000000000, -0.0000000000) -- (-1.6666666667, -0.0000000000);
\draw [color=black] (0.0000000000, -0.0000000000) -- (0.8333333333, 1.4433756730);
\draw [color=black] (0.0000000000, -0.0000000000) -- (-0.8333333333, 1.4433756730);
\draw [color=black] (-1.6666666667, -0.0000000000) -- (-3.3333333333, -0.0000000000);
\draw [color=black] (-1.6666666667, -0.0000000000) -- (-0.8333333333, 1.4433756730);
\draw [color=black] (-1.6666666667, -0.0000000000) -- (-2.5000000000, 1.4433756730);
\draw [color=black] (-3.3333333333, -0.0000000000) -- (-2.5000000000, 1.4433756730);
\draw [color=black] (2.5000000000, 1.4433756730) -- (0.8333333333, 1.4433756730);
\draw [color=black] (2.5000000000, 1.4433756730) -- (1.6666666667, 2.8867513459);
\draw [color=black] (0.8333333333, 1.4433756730) -- (-0.8333333333, 1.4433756730);
\draw [color=black] (0.8333333333, 1.4433756730) -- (1.6666666667, 2.8867513459);
\draw [color=black] (0.8333333333, 1.4433756730) -- (0.0000000000, 2.8867513459);
\draw [color=black] (-0.8333333333, 1.4433756730) -- (-2.5000000000, 1.4433756730);
\draw [color=black] (-0.8333333333, 1.4433756730) -- (0.0000000000, 2.8867513459);
\draw [color=black] (-0.8333333333, 1.4433756730) -- (-1.6666666667, 2.8867513459);
\draw [color=black] (-2.5000000000, 1.4433756730) -- (-1.6666666667, 2.8867513459);
\draw [color=black] (1.6666666667, 2.8867513459) -- (0.0000000000, 2.8867513459);
\draw [color=black] (1.6666666667, 2.8867513459) -- (0.8333333333, 4.3301270189);
\draw [color=black] (0.0000000000, 2.8867513459) -- (-1.6666666667, 2.8867513459);
\draw [color=black] (0.0000000000, 2.8867513459) -- (0.8333333333, 4.3301270189);
\draw [color=black] (0.0000000000, 2.8867513459) -- (-0.8333333333, 4.3301270189);
\draw [color=black] (-1.6666666667, 2.8867513459) -- (-0.8333333333, 4.3301270189);
\draw [color=black] (0.8333333333, 4.3301270189) -- (-0.8333333333, 4.3301270189);
\draw [color=black] (0.8333333333, 4.3301270189) -- (0.0000000000, 5.7735026919);
\draw [color=black] (-0.8333333333, 4.3301270189) -- (0.0000000000, 5.7735026919);
\draw [color=black] (0, 5.7735026919) -- (0.0, 7.50555349947);
\draw [color=black] (-5.0, -2.88675134595) -- (-6.5, -3.75277674973);
\draw [color=black] (5.0, -2.88675134595) -- (6.5, -3.75277674973);
\node at (-0.75,6.37972047455) {\Huge $e^1$};
\node at (-5.15,-3.83937929011) {\Huge $e^2$};
\node at (5.9,-2.54034118443) {\Huge $e^3$};
\node[opacity=0] at (0.0,-6.81273317644) {t};
\end{tikzpicture}}
        \caption{}
        \label{fig:primal}
    \end{subfigure}
    ~ 
    \begin{subfigure}[b]{0.4\textwidth}
    	\resizebox{\textwidth}{!}{\input{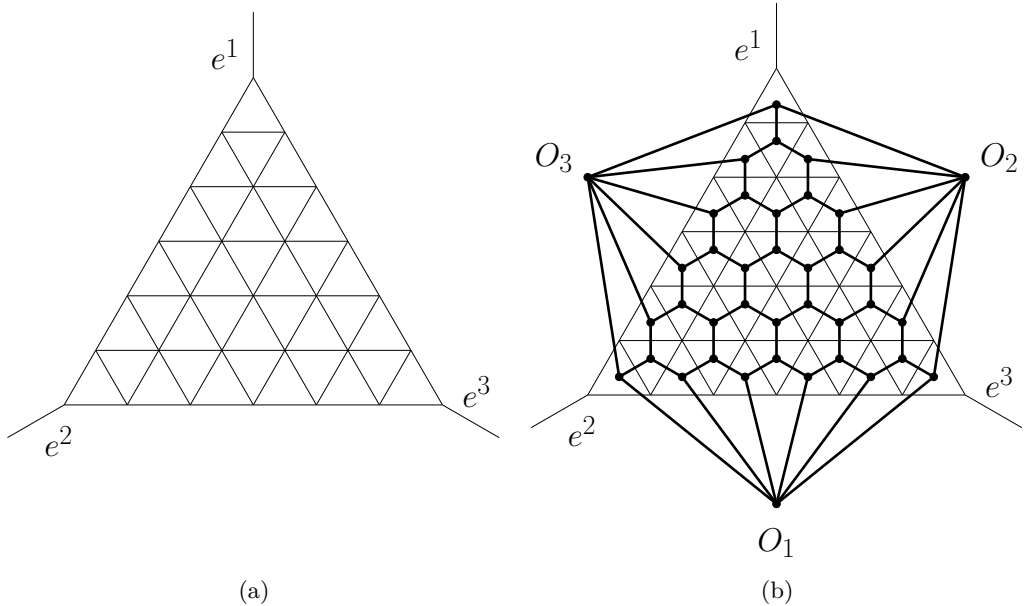}}
    	\caption{}
    	\label{fig:dual}
    \end{subfigure}
    \caption{The augmented $(\Delta_{3, 6}, E_{3, 6})$ and its dual.}
    \label{fig:primal-dual}
\end{figure}

A cut in the original graph can now be viewed as a collection of edges in the dual graph; an edge in the dual graph between two faces corresponding to the shared edge of the faces being cut. With this interpretation, Karger \etal give the following characterization for candidate optimal cuts of any weight function.

\begin{observation}[\cite{KKSTY04}] \label{obs:char-3way}
For any $w: E_{3, n} \rightarrow \Rz$, there exists a least-cost\footnote{Note that it is possible that there are cuts not of these forms that also achieve the same (optimal) cost.} 3-way cut $P$ that is of one of the following forms:
\begin{itemize}
\item $P$ contains three non-intersecting paths from a triangle to $O_1, O_2$ and $O_3$. Such $P$ is called a \emph{ball cut}.
\item $P$ contains two non-intersecting paths among $O_1, O_2$ and $O_3$. Such $P$ is called a \emph{2-corner cut}.
\end{itemize}
\end{observation}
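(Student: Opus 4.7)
My plan is to translate $P$ into the planar dual and exploit Euler's formula together with the fact that every interior dual vertex has dual-degree $3$.

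Let $P$ be a least-cost 3-way cut, $C$ its cut set, and $C^*$ the corresponding subgraph of the dual. Every interior dual vertex is a triangular face, so according to how its three primal corners split across clusters, its $C^*$-degree is $0$, $2$, or $3$. Call an interior vertex of $C^*$-degree $3$ a \emph{triangle} and let $T$ denote the number of triangles. Before counting, I perform two standard minimality reductions: (i) each of the three clusters may be assumed primal-connected, since any island of one cluster surrounded by another can be re-assigned to its surroundings without increasing the cost; (ii) $C^*$ contains no ``bubble'' cycle whose interior contains no terminal, since removing the edges of such a cycle yields a valid 3-way cut of strictly smaller cost. After these reductions I contract every interior $C^*$-vertex of degree $2$ to obtain a skeleton multigraph $H$ with vertex set $\{O_1, O_2, O_3\}\cup\{\text{triangles}\}$ and edges given by the dual paths between these special vertices.

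Adding back the three dual edges among the $O_i$'s that the paper disregards produces a planar multigraph $\tilde H$ whose faces are exactly the three cluster regions plus one outer face. Euler's formula gives $(T+3) - (E+3) + 4 = 2$, where $E$ is the number of skeleton paths, so $E = T + 2$. Summing degrees (each triangle has degree $3$ and each $O_i$ has degree $d_i + 2$, where $d_i$ is the skeleton degree of $O_i$ and the $+2$ comes from the two infinity edges to the other $O_j$'s) yields $3T + d_1 + d_2 + d_3 + 6 = 2(E+3) = 2T + 10$, hence $d_1 + d_2 + d_3 = 4 - T$. Because the side of the triangle opposite $e^i$ contains both $e^j$ and $e^k$, which lie in different clusters, at least one primal edge on that side must be cut, forcing $d_i \geqs 1$. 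Combined, $T \leqs 1$.

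Exhausting cases: $T = 1$ forces $(d_1, d_2, d_3) = (1, 1, 1)$, so $H$ is a single triangle joined to $O_1, O_2, O_3$ by three paths---a ball cut; $T = 0$ forces $(d_1, d_2, d_3)$ to be a permutation of $(1, 1, 2)$, so $H$ consists of two paths among the $O_i$'s sharing exactly one endpoint---a 2-corner cut. Non-intersectability of the paths follows from the planarity of the skeleton embedding. I expect the main obstacle to be justifying reduction (ii) together with the implicit assumption that $\tilde H$ is connected after the reductions: one must carefully verify that any cycle in an optimal $C^*$ either bounds a region containing a terminal (and hence contributes to the ``useful'' part of the skeleton detected by the Euler count) or can be deleted without merging two terminals' clusters. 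In particular, isolated cycles of $C^*$ consisting entirely of $C^*$-degree-$2$ vertices require a separate ad-hoc argument leveraging planarity and the cluster-connectedness supplied by (i).
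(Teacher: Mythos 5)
The paper never proves Observation~\ref{obs:char-3way} itself --- it cites the result from~\cite{KKSTY04} --- but it does prove the closely parallel Observation~\ref{obs:char-nonop}, and that proof takes a genuinely different route from yours. The paper works entirely in the primal graph: starting from an arbitrary cut, it repeatedly reassigns whole connected components of $G_P$ (first any component of the fourth cluster that misses a side, then any component labeled $i$ that does not contain $e^i$) until every label class is a single connected component containing its terminal, and then reads off by inspection that the resulting cut is a ball cut or a corner cut. Your proof is a dual Euler-characteristic count: reduce, contract the degree-$2$ dual vertices, close up the outer faces with the three infinity edges, and combine $V - E + F = 2$ with a degree sum to obtain $d_1 + d_2 + d_3 = 4 - T$, forcing $T \le 1$ and the desired dichotomy. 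This is an authentically different argument --- closer in spirit to the original Karger et al.~dual picture --- whereas the paper's primal merging is arguably more elementary precisely because it never has to confront connectivity or face-counting in the dual. Your acknowledged worry about the connectivity of $\tilde H$ is the real subtlety, and it (together with the fact that each cluster bounds a single face) is discharged once you observe that reduction~(i) makes every cluster region simply connected: a hole in cluster $i$ would have to enclose another cluster's terminal, but terminals sit on the boundary of the triangle, so no such hole exists. That same observation shows reduction~(ii) is subsumed by~(i): an isolated cycle of degree-$2$ dual vertices encloses a terminal-free island, which~(i) has already eliminated, so the separate ad-hoc argument you anticipate is not needed. With those two topological facts pinned down, your Euler count, the inequality $d_i \ge 1$, and the case split $T \in \{0, 1\}$ correctly recover the ball-cut/2-corner-cut classification.
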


Examples of a ball cut and a 2-corner cut can be found in Figure~\ref{fig:ballcut} and Figure~\ref{fig:2cornercut} respectively. These figures are reproduced from Figure 3 and Figure 4 of~\cite{KKSTY04} respectively. 

\begin{figure}[h!]
    \centering
    \begin{subfigure}[b]{0.3\textwidth}
        \resizebox{\textwidth}{!}{\begin{tikzpicture}
\draw [color=black] (5.0000000000, -2.8867513459) -- (3.3333333333, -2.8867513459);
\draw [color=black] (5.0000000000, -2.8867513459) -- (4.1666666667, -1.4433756730);
\draw [color=black] (3.3333333333, -2.8867513459) -- (1.6666666667, -2.8867513459);
\draw [color=black] (3.3333333333, -2.8867513459) -- (4.1666666667, -1.4433756730);
\draw [color=black] (3.3333333333, -2.8867513459) -- (2.5000000000, -1.4433756730);
\draw [color=black] (1.6666666667, -2.8867513459) -- (0.0000000000, -2.8867513459);
\draw [color=black] (1.6666666667, -2.8867513459) -- (2.5000000000, -1.4433756730);
\draw [color=black] (1.6666666667, -2.8867513459) -- (0.8333333333, -1.4433756730);
\draw [color=black] (0.0000000000, -2.8867513459) -- (-1.6666666667, -2.8867513459);
\draw [color=black] (0.0000000000, -2.8867513459) -- (0.8333333333, -1.4433756730);
\draw [color=black] (0.0000000000, -2.8867513459) -- (-0.8333333333, -1.4433756730);
\draw [color=black] (-1.6666666667, -2.8867513459) -- (-3.3333333333, -2.8867513459);
\draw [color=black] (-1.6666666667, -2.8867513459) -- (-0.8333333333, -1.4433756730);
\draw [color=black] (-1.6666666667, -2.8867513459) -- (-2.5000000000, -1.4433756730);
\draw [color=black] (-3.3333333333, -2.8867513459) -- (-5.0000000000, -2.8867513459);
\draw [color=black] (-3.3333333333, -2.8867513459) -- (-2.5000000000, -1.4433756730);
\draw [color=black] (-3.3333333333, -2.8867513459) -- (-4.1666666667, -1.4433756730);
\draw [color=black] (-5.0000000000, -2.8867513459) -- (-4.1666666667, -1.4433756730);
\draw [color=black] (4.1666666667, -1.4433756730) -- (2.5000000000, -1.4433756730);
\draw [color=black] (4.1666666667, -1.4433756730) -- (3.3333333333, -0.0000000000);
\draw [color=black] (2.5000000000, -1.4433756730) -- (0.8333333333, -1.4433756730);
\draw [color=black] (2.5000000000, -1.4433756730) -- (3.3333333333, -0.0000000000);
\draw [color=black] (2.5000000000, -1.4433756730) -- (1.6666666667, -0.0000000000);
\draw [color=black] (0.8333333333, -1.4433756730) -- (-0.8333333333, -1.4433756730);
\draw [color=black] (0.8333333333, -1.4433756730) -- (1.6666666667, -0.0000000000);
\draw [color=black] (0.8333333333, -1.4433756730) -- (0.0000000000, -0.0000000000);
\draw [color=black] (-0.8333333333, -1.4433756730) -- (-2.5000000000, -1.4433756730);
\draw [color=black] (-0.8333333333, -1.4433756730) -- (0.0000000000, -0.0000000000);
\draw [color=black] (-0.8333333333, -1.4433756730) -- (-1.6666666667, -0.0000000000);
\draw [color=black] (-2.5000000000, -1.4433756730) -- (-4.1666666667, -1.4433756730);
\draw [color=black] (-2.5000000000, -1.4433756730) -- (-1.6666666667, -0.0000000000);
\draw [color=black] (-2.5000000000, -1.4433756730) -- (-3.3333333333, -0.0000000000);
\draw [color=black] (-4.1666666667, -1.4433756730) -- (-3.3333333333, -0.0000000000);
\draw [color=black] (3.3333333333, -0.0000000000) -- (1.6666666667, -0.0000000000);
\draw [color=black] (3.3333333333, -0.0000000000) -- (2.5000000000, 1.4433756730);
\draw [color=black] (1.6666666667, -0.0000000000) -- (0.0000000000, -0.0000000000);
\draw [color=black] (1.6666666667, -0.0000000000) -- (2.5000000000, 1.4433756730);
\draw [color=black] (1.6666666667, -0.0000000000) -- (0.8333333333, 1.4433756730);
\draw [color=black] (0.0000000000, -0.0000000000) -- (-1.6666666667, -0.0000000000);
\draw [color=black] (0.0000000000, -0.0000000000) -- (0.8333333333, 1.4433756730);
\draw [color=black] (0.0000000000, -0.0000000000) -- (-0.8333333333, 1.4433756730);
\draw [color=black] (-1.6666666667, -0.0000000000) -- (-3.3333333333, -0.0000000000);
\draw [color=black] (-1.6666666667, -0.0000000000) -- (-0.8333333333, 1.4433756730);
\draw [color=black] (-1.6666666667, -0.0000000000) -- (-2.5000000000, 1.4433756730);
\draw [color=black] (-3.3333333333, -0.0000000000) -- (-2.5000000000, 1.4433756730);
\draw [color=black] (2.5000000000, 1.4433756730) -- (0.8333333333, 1.4433756730);
\draw [color=black] (2.5000000000, 1.4433756730) -- (1.6666666667, 2.8867513459);
\draw [color=black] (0.8333333333, 1.4433756730) -- (-0.8333333333, 1.4433756730);
\draw [color=black] (0.8333333333, 1.4433756730) -- (1.6666666667, 2.8867513459);
\draw [color=black] (0.8333333333, 1.4433756730) -- (0.0000000000, 2.8867513459);
\draw [color=black] (-0.8333333333, 1.4433756730) -- (-2.5000000000, 1.4433756730);
\draw [color=black] (-0.8333333333, 1.4433756730) -- (0.0000000000, 2.8867513459);
\draw [color=black] (-0.8333333333, 1.4433756730) -- (-1.6666666667, 2.8867513459);
\draw [color=black] (-2.5000000000, 1.4433756730) -- (-1.6666666667, 2.8867513459);
\draw [color=black] (1.6666666667, 2.8867513459) -- (0.0000000000, 2.8867513459);
\draw [color=black] (1.6666666667, 2.8867513459) -- (0.8333333333, 4.3301270189);
\draw [color=black] (0.0000000000, 2.8867513459) -- (-1.6666666667, 2.8867513459);
\draw [color=black] (0.0000000000, 2.8867513459) -- (0.8333333333, 4.3301270189);
\draw [color=black] (0.0000000000, 2.8867513459) -- (-0.8333333333, 4.3301270189);
\draw [color=black] (-1.6666666667, 2.8867513459) -- (-0.8333333333, 4.3301270189);
\draw [color=black] (0.8333333333, 4.3301270189) -- (-0.8333333333, 4.3301270189);
\draw [color=black] (0.8333333333, 4.3301270189) -- (0.0000000000, 5.7735026919);
\draw [color=black] (-0.8333333333, 4.3301270189) -- (0.0000000000, 5.7735026919);
\draw [color=black] (0, 5.7735026919) -- (0.0, 7.50555349947);
\draw [color=black] (-5.0, -2.88675134595) -- (-6.5, -3.75277674973);
\draw [color=black] (5.0, -2.88675134595) -- (6.5, -3.75277674973);
\draw[fill] (0.0000000000, -5.7735026919) circle [radius=0.1000000000];
\node at (0.0,-6.81273317644) {\Huge $O_1$};
\draw[fill] (5.0000000000, 2.8867513459) circle [radius=0.1000000000];
\node at (5.9,3.40636658822) {\Huge $O_2$};
\draw[fill] (-5.0000000000, 2.8867513459) circle [radius=0.1000000000];
\node at (-5.9,3.40636658822) {\Huge $O_3$};
\draw[fill] (4.1666666667, -2.4056261216) circle [radius=0.1000000000];
\draw[fill] (2.5000000000, -2.4056261216) circle [radius=0.1000000000];
\draw[fill] (3.3333333333, -1.9245008973) circle [radius=0.1000000000];
\draw[fill] (0.8333333333, -2.4056261216) circle [radius=0.1000000000];
\draw[fill] (1.6666666667, -1.9245008973) circle [radius=0.1000000000];
\draw[fill] (-0.8333333333, -2.4056261216) circle [radius=0.1000000000];
\draw[fill] (0.0000000000, -1.9245008973) circle [radius=0.1000000000];
\draw[fill] (-2.5000000000, -2.4056261216) circle [radius=0.1000000000];
\draw[fill] (-1.6666666667, -1.9245008973) circle [radius=0.1000000000];
\draw[fill] (-4.1666666667, -2.4056261216) circle [radius=0.1000000000];
\draw[fill] (-3.3333333333, -1.9245008973) circle [radius=0.1000000000];
\draw[fill] (3.3333333333, -0.9622504486) circle [radius=0.1000000000];
\draw[fill] (1.6666666667, -0.9622504486) circle [radius=0.1000000000];
\draw[fill] (2.5000000000, -0.4811252243) circle [radius=0.1000000000];
\draw[fill] (0.0000000000, -0.9622504486) circle [radius=0.1000000000];
\draw[fill] (0.8333333333, -0.4811252243) circle [radius=0.1000000000];
\draw[fill] (-1.6666666667, -0.9622504486) circle [radius=0.1000000000];
\draw[fill] (-0.8333333333, -0.4811252243) circle [radius=0.1000000000];
\draw[fill] (-3.3333333333, -0.9622504486) circle [radius=0.1000000000];
\draw[fill] (-2.5000000000, -0.4811252243) circle [radius=0.1000000000];
\draw[fill] (2.5000000000, 0.4811252243) circle [radius=0.1000000000];
\draw[fill] (0.8333333333, 0.4811252243) circle [radius=0.1000000000];
\draw[fill] (1.6666666667, 0.9622504486) circle [radius=0.1000000000];
\draw[fill] (-0.8333333333, 0.4811252243) circle [radius=0.1000000000];
\draw[fill] (0.0000000000, 0.9622504486) circle [radius=0.1000000000];
\draw[fill] (-2.5000000000, 0.4811252243) circle [radius=0.1000000000];
\draw[fill] (-1.6666666667, 0.9622504486) circle [radius=0.1000000000];
\draw[fill] (1.6666666667, 1.9245008973) circle [radius=0.1000000000];
\draw[fill] (0.0000000000, 1.9245008973) circle [radius=0.1000000000];
\draw[fill] (0.8333333333, 2.4056261216) circle [radius=0.1000000000];
\draw[fill] (-1.6666666667, 1.9245008973) circle [radius=0.1000000000];
\draw[fill] (-0.8333333333, 2.4056261216) circle [radius=0.1000000000];
\draw[fill] (0.8333333333, 3.3678765703) circle [radius=0.1000000000];
\draw[fill] (-0.8333333333, 3.3678765703) circle [radius=0.1000000000];
\draw[fill] (0.0000000000, 3.8490017946) circle [radius=0.1000000000];
\draw[fill] (0.0000000000, 4.8112522432) circle [radius=0.1000000000];
\draw [color=black, line width=4] (0.0000000000, -5.7735026919) -- (0.8333333333, -2.4056261216);
\draw [color=black, line width=4] (0.8333333333, -2.4056261216) -- (1.6666666667, -1.9245008973);
\draw [color=black, line width=4] (1.6666666667, -1.9245008973) -- (1.6666666667, -0.9622504486);
\draw [color=black, line width=4] (1.6666666667, -0.9622504486) -- (0.8333333333, -0.4811252243);
\draw [color=black, line width=4] (0.8333333333, -0.4811252243) -- (0.8333333333, 0.4811252243);
\draw [color=black, line width=4] (5.0000000000, 2.8867513459) -- (2.5000000000, 0.4811252243);
\draw [color=black, line width=4] (2.5000000000, 0.4811252243) -- (1.6666666667, 0.9622504486);
\draw [color=black, line width=4] (1.6666666667, 0.9622504486) -- (0.8333333333, 0.4811252243);
\draw [color=black, line width=4] (-5.0000000000, 2.8867513459) -- (-2.5000000000, 0.4811252243);
\draw [color=black, line width=4] (-2.5000000000, 0.4811252243) -- (-1.6666666667, 0.9622504486);
\draw [color=black, line width=4] (-1.6666666667, 0.9622504486) -- (-0.8333333333, 0.4811252243);
\draw [color=black, line width=4] (-0.8333333333, 0.4811252243) -- (0.0000000000, 0.9622504486);
\draw [color=black, line width=4] (0.0000000000, 0.9622504486) -- (0.8333333333, 0.4811252243);
\end{tikzpicture}}
        \caption{A ball cut}
        \label{fig:ballcut}
    \end{subfigure}
    ~ 
    \begin{subfigure}[b]{0.3\textwidth}
    	\resizebox{\textwidth}{!}{\begin{tikzpicture}
\draw [color=black] (5.0000000000, -2.8867513459) -- (3.3333333333, -2.8867513459);
\draw [color=black] (5.0000000000, -2.8867513459) -- (4.1666666667, -1.4433756730);
\draw [color=black] (3.3333333333, -2.8867513459) -- (1.6666666667, -2.8867513459);
\draw [color=black] (3.3333333333, -2.8867513459) -- (4.1666666667, -1.4433756730);
\draw [color=black] (3.3333333333, -2.8867513459) -- (2.5000000000, -1.4433756730);
\draw [color=black] (1.6666666667, -2.8867513459) -- (0.0000000000, -2.8867513459);
\draw [color=black] (1.6666666667, -2.8867513459) -- (2.5000000000, -1.4433756730);
\draw [color=black] (1.6666666667, -2.8867513459) -- (0.8333333333, -1.4433756730);
\draw [color=black] (0.0000000000, -2.8867513459) -- (-1.6666666667, -2.8867513459);
\draw [color=black] (0.0000000000, -2.8867513459) -- (0.8333333333, -1.4433756730);
\draw [color=black] (0.0000000000, -2.8867513459) -- (-0.8333333333, -1.4433756730);
\draw [color=black] (-1.6666666667, -2.8867513459) -- (-3.3333333333, -2.8867513459);
\draw [color=black] (-1.6666666667, -2.8867513459) -- (-0.8333333333, -1.4433756730);
\draw [color=black] (-1.6666666667, -2.8867513459) -- (-2.5000000000, -1.4433756730);
\draw [color=black] (-3.3333333333, -2.8867513459) -- (-5.0000000000, -2.8867513459);
\draw [color=black] (-3.3333333333, -2.8867513459) -- (-2.5000000000, -1.4433756730);
\draw [color=black] (-3.3333333333, -2.8867513459) -- (-4.1666666667, -1.4433756730);
\draw [color=black] (-5.0000000000, -2.8867513459) -- (-4.1666666667, -1.4433756730);
\draw [color=black] (4.1666666667, -1.4433756730) -- (2.5000000000, -1.4433756730);
\draw [color=black] (4.1666666667, -1.4433756730) -- (3.3333333333, -0.0000000000);
\draw [color=black] (2.5000000000, -1.4433756730) -- (0.8333333333, -1.4433756730);
\draw [color=black] (2.5000000000, -1.4433756730) -- (3.3333333333, -0.0000000000);
\draw [color=black] (2.5000000000, -1.4433756730) -- (1.6666666667, -0.0000000000);
\draw [color=black] (0.8333333333, -1.4433756730) -- (-0.8333333333, -1.4433756730);
\draw [color=black] (0.8333333333, -1.4433756730) -- (1.6666666667, -0.0000000000);
\draw [color=black] (0.8333333333, -1.4433756730) -- (0.0000000000, -0.0000000000);
\draw [color=black] (-0.8333333333, -1.4433756730) -- (-2.5000000000, -1.4433756730);
\draw [color=black] (-0.8333333333, -1.4433756730) -- (0.0000000000, -0.0000000000);
\draw [color=black] (-0.8333333333, -1.4433756730) -- (-1.6666666667, -0.0000000000);
\draw [color=black] (-2.5000000000, -1.4433756730) -- (-4.1666666667, -1.4433756730);
\draw [color=black] (-2.5000000000, -1.4433756730) -- (-1.6666666667, -0.0000000000);
\draw [color=black] (-2.5000000000, -1.4433756730) -- (-3.3333333333, -0.0000000000);
\draw [color=black] (-4.1666666667, -1.4433756730) -- (-3.3333333333, -0.0000000000);
\draw [color=black] (3.3333333333, -0.0000000000) -- (1.6666666667, -0.0000000000);
\draw [color=black] (3.3333333333, -0.0000000000) -- (2.5000000000, 1.4433756730);
\draw [color=black] (1.6666666667, -0.0000000000) -- (0.0000000000, -0.0000000000);
\draw [color=black] (1.6666666667, -0.0000000000) -- (2.5000000000, 1.4433756730);
\draw [color=black] (1.6666666667, -0.0000000000) -- (0.8333333333, 1.4433756730);
\draw [color=black] (0.0000000000, -0.0000000000) -- (-1.6666666667, -0.0000000000);
\draw [color=black] (0.0000000000, -0.0000000000) -- (0.8333333333, 1.4433756730);
\draw [color=black] (0.0000000000, -0.0000000000) -- (-0.8333333333, 1.4433756730);
\draw [color=black] (-1.6666666667, -0.0000000000) -- (-3.3333333333, -0.0000000000);
\draw [color=black] (-1.6666666667, -0.0000000000) -- (-0.8333333333, 1.4433756730);
\draw [color=black] (-1.6666666667, -0.0000000000) -- (-2.5000000000, 1.4433756730);
\draw [color=black] (-3.3333333333, -0.0000000000) -- (-2.5000000000, 1.4433756730);
\draw [color=black] (2.5000000000, 1.4433756730) -- (0.8333333333, 1.4433756730);
\draw [color=black] (2.5000000000, 1.4433756730) -- (1.6666666667, 2.8867513459);
\draw [color=black] (0.8333333333, 1.4433756730) -- (-0.8333333333, 1.4433756730);
\draw [color=black] (0.8333333333, 1.4433756730) -- (1.6666666667, 2.8867513459);
\draw [color=black] (0.8333333333, 1.4433756730) -- (0.0000000000, 2.8867513459);
\draw [color=black] (-0.8333333333, 1.4433756730) -- (-2.5000000000, 1.4433756730);
\draw [color=black] (-0.8333333333, 1.4433756730) -- (0.0000000000, 2.8867513459);
\draw [color=black] (-0.8333333333, 1.4433756730) -- (-1.6666666667, 2.8867513459);
\draw [color=black] (-2.5000000000, 1.4433756730) -- (-1.6666666667, 2.8867513459);
\draw [color=black] (1.6666666667, 2.8867513459) -- (0.0000000000, 2.8867513459);
\draw [color=black] (1.6666666667, 2.8867513459) -- (0.8333333333, 4.3301270189);
\draw [color=black] (0.0000000000, 2.8867513459) -- (-1.6666666667, 2.8867513459);
\draw [color=black] (0.0000000000, 2.8867513459) -- (0.8333333333, 4.3301270189);
\draw [color=black] (0.0000000000, 2.8867513459) -- (-0.8333333333, 4.3301270189);
\draw [color=black] (-1.6666666667, 2.8867513459) -- (-0.8333333333, 4.3301270189);
\draw [color=black] (0.8333333333, 4.3301270189) -- (-0.8333333333, 4.3301270189);
\draw [color=black] (0.8333333333, 4.3301270189) -- (0.0000000000, 5.7735026919);
\draw [color=black] (-0.8333333333, 4.3301270189) -- (0.0000000000, 5.7735026919);
\draw [color=black] (0, 5.7735026919) -- (0.0, 7.50555349947);
\draw [color=black] (-5.0, -2.88675134595) -- (-6.5, -3.75277674973);
\draw [color=black] (5.0, -2.88675134595) -- (6.5, -3.75277674973);
\draw[fill] (0.0000000000, -5.7735026919) circle [radius=0.1000000000];
\node at (0.0,-6.81273317644) {\Huge $O_1$};
\draw[fill] (5.0000000000, 2.8867513459) circle [radius=0.1000000000];
\node at (5.9,3.40636658822) {\Huge $O_2$};
\draw[fill] (-5.0000000000, 2.8867513459) circle [radius=0.1000000000];
\node at (-5.9,3.40636658822) {\Huge $O_3$};
\draw[fill] (4.1666666667, -2.4056261216) circle [radius=0.1000000000];
\draw[fill] (2.5000000000, -2.4056261216) circle [radius=0.1000000000];
\draw[fill] (3.3333333333, -1.9245008973) circle [radius=0.1000000000];
\draw[fill] (0.8333333333, -2.4056261216) circle [radius=0.1000000000];
\draw[fill] (1.6666666667, -1.9245008973) circle [radius=0.1000000000];
\draw[fill] (-0.8333333333, -2.4056261216) circle [radius=0.1000000000];
\draw[fill] (0.0000000000, -1.9245008973) circle [radius=0.1000000000];
\draw[fill] (-2.5000000000, -2.4056261216) circle [radius=0.1000000000];
\draw[fill] (-1.6666666667, -1.9245008973) circle [radius=0.1000000000];
\draw[fill] (-4.1666666667, -2.4056261216) circle [radius=0.1000000000];
\draw[fill] (-3.3333333333, -1.9245008973) circle [radius=0.1000000000];
\draw[fill] (3.3333333333, -0.9622504486) circle [radius=0.1000000000];
\draw[fill] (1.6666666667, -0.9622504486) circle [radius=0.1000000000];
\draw[fill] (2.5000000000, -0.4811252243) circle [radius=0.1000000000];
\draw[fill] (0.0000000000, -0.9622504486) circle [radius=0.1000000000];
\draw[fill] (0.8333333333, -0.4811252243) circle [radius=0.1000000000];
\draw[fill] (-1.6666666667, -0.9622504486) circle [radius=0.1000000000];
\draw[fill] (-0.8333333333, -0.4811252243) circle [radius=0.1000000000];
\draw[fill] (-3.3333333333, -0.9622504486) circle [radius=0.1000000000];
\draw[fill] (-2.5000000000, -0.4811252243) circle [radius=0.1000000000];
\draw[fill] (2.5000000000, 0.4811252243) circle [radius=0.1000000000];
\draw[fill] (0.8333333333, 0.4811252243) circle [radius=0.1000000000];
\draw[fill] (1.6666666667, 0.9622504486) circle [radius=0.1000000000];
\draw[fill] (-0.8333333333, 0.4811252243) circle [radius=0.1000000000];
\draw[fill] (0.0000000000, 0.9622504486) circle [radius=0.1000000000];
\draw[fill] (-2.5000000000, 0.4811252243) circle [radius=0.1000000000];
\draw[fill] (-1.6666666667, 0.9622504486) circle [radius=0.1000000000];
\draw[fill] (1.6666666667, 1.9245008973) circle [radius=0.1000000000];
\draw[fill] (0.0000000000, 1.9245008973) circle [radius=0.1000000000];
\draw[fill] (0.8333333333, 2.4056261216) circle [radius=0.1000000000];
\draw[fill] (-1.6666666667, 1.9245008973) circle [radius=0.1000000000];
\draw[fill] (-0.8333333333, 2.4056261216) circle [radius=0.1000000000];
\draw[fill] (0.8333333333, 3.3678765703) circle [radius=0.1000000000];
\draw[fill] (-0.8333333333, 3.3678765703) circle [radius=0.1000000000];
\draw[fill] (0.0000000000, 3.8490017946) circle [radius=0.1000000000];
\draw[fill] (0.0000000000, 4.8112522432) circle [radius=0.1000000000];
\draw [color=black, line width=4] (0.0000000000, -5.7735026919) -- (0.8333333333, -2.4056261216);
\draw [color=black, line width=4] (0.8333333333, -2.4056261216) -- (1.6666666667, -1.9245008973);
\draw [color=black, line width=4] (1.6666666667, -1.9245008973) -- (1.6666666667, -0.9622504486);
\draw [color=black, line width=4] (1.6666666667, -0.9622504486) -- (0.8333333333, -0.4811252243);
\draw [color=black, line width=4] (0.8333333333, -0.4811252243) -- (0.8333333333, 0.4811252243);
\draw [color=black, line width=4] (5.0000000000, 2.8867513459) -- (2.5000000000, 0.4811252243);
\draw [color=black, line width=4] (2.5000000000, 0.4811252243) -- (1.6666666667, 0.9622504486);
\draw [color=black, line width=4] (1.6666666667, 0.9622504486) -- (0.8333333333, 0.4811252243);
\draw [color=black, line width=4] (0.0000000000, -5.7735026919) -- (-2.5000000000, -2.4056261216);
\draw [color=black, line width=4] (-2.5000000000, -2.4056261216) -- (-3.3333333333, -1.9245008973);
\draw [color=black, line width=4] (-3.3333333333, -1.9245008973) -- (-3.3333333333, -0.9622504486);
\draw [color=black, line width=4] (-5.0000000000, 2.8867513459) -- (-3.3333333333, -0.9622504486);
\end{tikzpicture}}
    	\caption{A 2-corner cut}
    	\label{fig:2cornercut}
    \end{subfigure}
    ~
    \begin{subfigure}[b]{0.3\textwidth}
    	\resizebox{\textwidth}{!}{\input{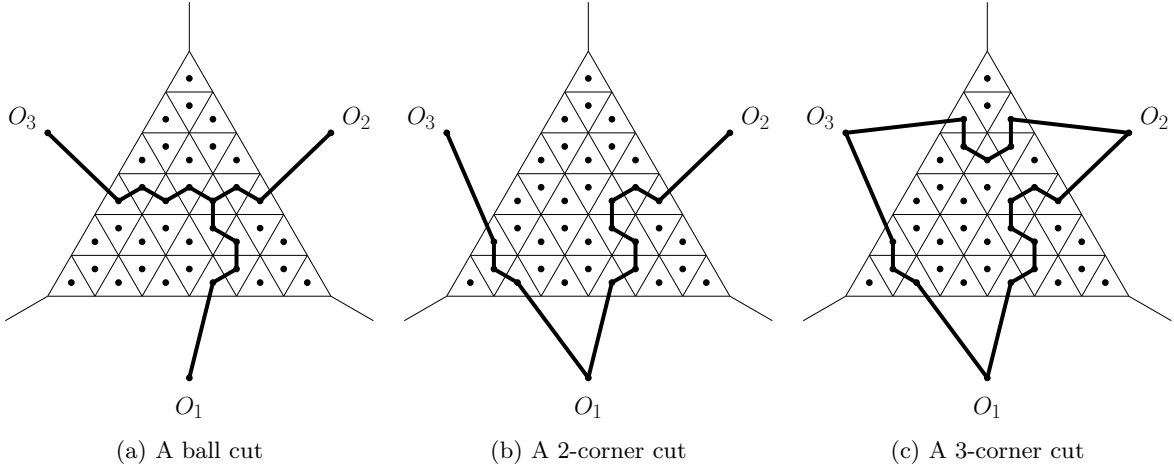}}
    	\caption{A 3-corner cut}
    	\label{fig:3cornercut}
    \end{subfigure}
    \caption{Examples of a ball cut, a 2-corner cut and a 3-corner cut of $(\Delta_{3, 6}, E_{3, 6})$.}
    \label{fig:cuts}
\end{figure}

The observation implies that, if one wants to prove that every cut incurs large cost with respect to $w$, then it is enough to show that all ball cuts and 2-corner cuts have large cost according to $w$. This is much more convenient to work with and it turns out that finding such $w$ can be formulated as a linear program of size polynomial in $n$. Once this observation was made, Karger \etal then solved the LP for large $n$ and used the solution as an inspiration to come up with the $12/11 - \varepsilon$ gap for Multiway Cut with three terminals. 

We make the following analogous observation for non-opposite cuts. This observation allows us to modify Karger \etal's LP to express the problem of finding an integrality gap against non-opposite cuts on $(\Delta_{3, n}, E_{3, n})$. We then solve the LP and use it as an inspiration for our explicit gap construction. Since the LP will not appear in our proof, we refer interested readers to Section 3.1 of~\cite{KKSTY04} for more details on the LP.

\begin{observation} \label{obs:char-nonop}
For any $w: E_{3, n} \rightarrow \Rz$, there exists a least-cost non-opposite cut $P$ that is of one of the following forms:
\begin{itemize}
\item $P$ is a ball cut.
\item $P$ contains three non-intersecting paths among $O_1, O_2$ and $O_3$. Such $P$ is called a \emph{3-corner cut}.
\end{itemize}
\end{observation}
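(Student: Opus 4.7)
The plan is to adapt the proof of Observation~\ref{obs:char-3way} from~\cite{KKSTY04} by analyzing how a possibly non-empty fourth cluster $C_4$ changes the dual structure. Starting from a least-cost non-opposite cut $P$ with clusters $C_1, C_2, C_3, C_4$, I would perform a case analysis based on the topology of $C_4$.

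Suppose first that $C_4 = \emptyset$, so $P$ is an ordinary $3$-way cut. By Observation~\ref{obs:char-3way} we may assume $P$ is a ball cut or a $2$-corner cut. The key new step is to handle a $2$-corner cut that also satisfies the non-opposite constraint. A $2$-corner cut has two dual paths sharing some hinge vertex $O_a$, and both paths leave $O_a$ along dual edges whose primal duals lie on the simplex edge opposite to $e^a$. If both paths leave $O_a$ along the \emph{same} first dual edge, they share a common first triangle $T$, and $P$ can be rewritten as three paths from $T$ to $O_1, O_2, O_3$, which is exactly a ball cut. Otherwise, the two paths enclose a ``trapped'' primal component, which necessarily contains some primal vertices on the edge opposite to $e^a$. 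By the non-opposite constraint these vertices cannot lie in cluster $a$, so they must lie in $C_b$ or $C_c$ for $\{a,b,c\} = \{1,2,3\}$; reassigning this trapped component to $C_4$ does not change any cut edge (it is already disconnected in the primal from the main $C_b$ or $C_c$ component), so cost is preserved and we reduce to the next case.

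Now suppose $C_4 \neq \emptyset$, and without loss of generality $C_4$ is connected (disconnected components can be treated separately, and any component that can be absorbed into an adjacent terminal cluster without violating the non-opposite constraint is absorbed, since this can only decrease cost). The boundary of $C_4$ in the dual, formed by the dual edges between $C_4$ and the terminal clusters, is a closed curve surrounding $C_4$. If this curve lies entirely in the interior of the simplex, then $C_4$ is a bubble surrounded by the three terminal clusters; an argument analogous to Karger et al.'s for the exterior structure shows that the rest of the cut consists of three non-intersecting dual paths extending from the bubble's boundary to $O_1, O_2, O_3$, giving a ball cut (possibly after contracting $C_4$ to a single interior face via cost-monotonic reductions). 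If the boundary of $C_4$ reaches the simplex boundary, then, since $C_4$ must be separated from all three terminal clusters and each terminal cluster attaches to its $O_i$, the boundary necessarily passes through all three of $O_1, O_2, O_3$; the three arcs between consecutive $O_i$'s along the boundary yield three non-intersecting dual paths among $O_1, O_2, O_3$, i.e., $P$ is a $3$-corner cut.

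The main technical obstacle is the bubble case: rigorously carrying out the contraction of $C_4$ to a single central triangle (or equivalently, showing that any residual structure inside $C_4$'s boundary cycle can be removed without increasing cost) requires a local cost-monotonicity argument analogous to the one implicit in the proof of Observation~\ref{obs:char-3way} in~\cite{KKSTY04}, together with a careful check that none of these simplifications violate the non-opposite constraint on the involved boundary vertices.
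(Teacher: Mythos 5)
Your proposal takes a genuinely different route: you work in the dual graph, treat Observation~\ref{obs:char-3way} as a black box, and run a case analysis on the topology of $C_4$. The paper instead works entirely in the primal and never invokes Observation~\ref{obs:char-3way}: starting from an arbitrary non-opposite cut $P$, it (i)~reassigns any connected component of $C_4$ that fails to touch all three sides of the simplex to one of the terminal clusters (always possible while keeping $P$ non-opposite), and then (ii)~repeatedly reassigns any component labeled $i\in[3]$ that does not contain $e^i$ to a neighboring cluster, again keeping $P$ non-opposite. Both moves only uncut edges and (ii) decreases the number of components, so the process terminates, leaving one component per label; the result is directly a ball cut if $C_4=\emptyset$ and a 3-corner cut otherwise. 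This avoids all the case distinctions your proof attempts.

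Beyond being longer, your argument has concrete gaps. In Case~1, applying Observation~\ref{obs:char-3way} to $P$ produces some $3$-way cut $P'$ that need not be non-opposite, yet your next sentence appeals to ``the non-opposite constraint'' on $P'$. In a $2$-corner cut whose two dual paths leave $O_a$ via \emph{different} first edges, the vertices of the side opposite $e^a$ lying between the two entry points belong to the \emph{middle} region, which is assigned to cluster $a$ and is connected to $e^a$; they are not a separate ``trapped'' component that can be relabeled $4$ for free --- doing so would cut additional edges. So this sub-case is not resolved. In Case~2, the assertion that a non-interior boundary of $C_4$ ``necessarily passes through all three of $O_1,O_2,O_3$'' is unjustified: a $C_4$-component might touch only one or two sides of the simplex. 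You gesture at absorbing such components into adjacent terminal clusters, but this is precisely the non-trivial step (the paper's transformation~(i)) and needs to be argued, including the check that non-oppositeness is preserved. Finally, as you yourself note, the bubble case is left as an unresolved ``technical obstacle''; in the paper's framework it is handled automatically because an interior $C_4$-component does not touch all three sides and is therefore absorbed.
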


An example of a 3-corner cut can be found in Figure~\ref{fig:3cornercut}. For brevity, we will sometimes abbreviate  3-corner cuts as simply \emph{corner cuts}. Note that we reserve this abbreviation only for 3-corner cuts and we will never shorten 2-corner cuts. The proof of our observation is quite simple and is given below.

\begin{proofof}[Observation~\ref{obs:char-nonop}]
Let $P$ be any non-opposite cut of $(\Delta_{3, n}, E_{3, n})$ (not necessary a ball cut or a corner cut). To prove Observation~\ref{obs:char-nonop}, we will provide a sequence of transformations of $P$ such that all uncut edges in $P$ remain uncut and that, at the end of the transformation, $P$ becomes either a ball cut or a corner cut.

For convenience, let $G_P$ be the graph $(\Delta_{3, n}, E_{3, n})$ after all the edges cut by $P$ are removed. Observe that every point in a connected component of $G_P$ is assigned by $P$ to the same cluster. The transformations can be described as follows:
\begin{itemize}
\item For each connected component $C$ of $G_P$ that is assigned to 4, if $C$ intersects with all the three sides of the triangle, then do nothing. Otherwise, reassign the whole component to either $1, 2$ or $3$, whichever one still maintains the non-oppositeness of $P$; note that since $C$ does not intersect with all three sides of the triangle, this is always possible.
\item As long as there exists a component $C$ of $G_P$ that is assigned to $i \in [3]$ but does not contain $e^i$, reassign $C$ to the same cluster as one of its neighbors. Moreover, it is again easy to check that we can always pick the neighboring cluster such that $P$ remains non-opposite.
\end{itemize}

Clearly, any edge that was initially uncut remains uncut in the final $P$. Moreover, since the second transformation reduces the number of connected components of $G_P$ by at least one, the sequence of transformations always halts. Finally, note that, in the resulting $P$, all points assigned to any $i \in [4]$ form a connected component. Moreover, if any point is assigned to 4, then the corresponding connected component must contain at least one point from each of the three sides. It is now easy to see that, if no point is assigned to 4, then the final cut is a ball cut. On the other hand, if at least one point is assigned to 4, the final cut is a corner cut.
\end{proofof}

\subsection{The Integrality Gap} \label{subsec:int-nonop}

With the characterization of non-opposite cuts in place, we are now ready to describe our integrality gap and prove Lemma~\ref{lem:3sim}. We remark here that, although we have so far focused only on Karger \etal's result, our construction will be more similar to that of Cunningham and Tang~\cite{CT99}, in that we will partition the simplex into ``corner triangles'' and ``middle hexagon''. We would like to stress however that, even with these similarities, these two gaps are significantly different and nothing carries over from there as a blackbox.

\begin{proofof}[Lemma~\ref{lem:3sim}]
The gap can be described as follows. First, we divide the vertex set into \emph{corner triangles} $T_1, T_2, T_3$ and a \emph{middle hexagon} $H$ as follows. The middle hexagon contains all points $x \in \Delta_{3, n}$ such that all the coordinates of $x$ do not exceed 2/3, whereas the corner triangle $T_i$ contains all the points $x$'s such that $x_i \geqs 2/3$. Note that this is not a partition since the middle hexagon $H$ and each corner triangle $T_i$ still share the line $x_i = 2/3$, but this notation will be more convenient for us.

Every edge in the middle hexagon including its border is assigned weight $\rho := 1/(2n)$. Moreover, for each $i$, every non-border edge in $T_i$ that is not parallel to the opposite side of $e^i$ is also assigned weight $\rho$, whereas the non-border edges parallel to the opposite side of $e^i$ are assigned weight zero. Finally, for each of the two borders of $T_i$ containing $e^i$, we define its weight as follows. The edge closest to $e^i$ is assigned weight $(n/3)\rho$, the second closest is assigned $(n/3 - 1)\rho$, and so on. An illustration of the construction is shown in Figure~\ref{fig:gap}.

\begin{figure}[h!]
    \centering
    \resizebox{0.7\textwidth}{!}{\input{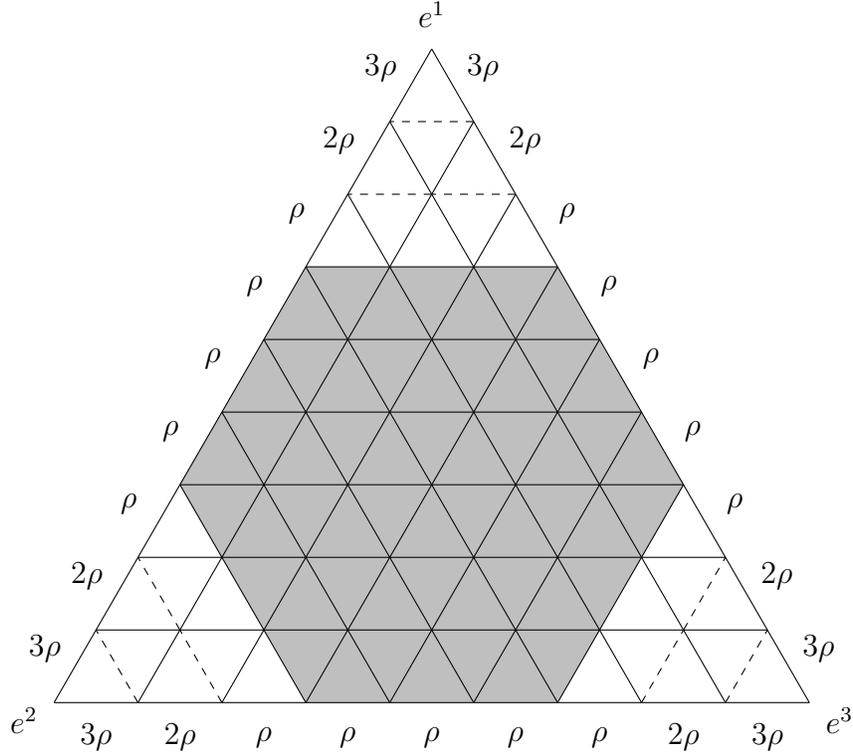}}
   	\caption{The constructed integrality gap when $n = 9$. The middle hexagon $H$ is the shaded area. The weights of the edges on the borders of the triangle are shown. For the non-border edges, the dashed edges have weight zero whereas the rest of the edges all have weight $\rho$.}
    \label{fig:gap}
\end{figure}

Let us now compute the value of the canonical LP solution with respect to $w$. Recall that the value $\lpc(w)$ is simply $\frac{1}{n} \sum_{(x, y) \in E_{3, n}} w(x, y)$. Note that due to symmetry the sum of all the weights is three times the sum of the weights of all $(x, y) \in E_{3, n}$ that are parallel to $(e^1, e^2)$. The total weight on the line $(e^1, e^2)$ is 
\begin{align*}
\left((n/3) \rho + (n/3 - 1) \rho + \cdots + \rho \right) + (\rho + \cdots + \rho) + \left(\rho + \cdots + (n/3)\rho \right) = (n^2/9 + 2n/3) \rho
\end{align*}
and the total weight on the non-border edges parallel to $(e^1, e^2)$ is simply
\begin{align*}
(n - 1)\rho + \cdots + (n/3) \rho = (4n^2/9 - n/3) \rho.
\end{align*}
Hence, the total weight of all edges parallel to $(e^1, e^2)$ is $(5n^2/9 + n/3) \rho$.
As a result, we have
\begin{align*}
\lpc(w) = \frac{1}{n} \cdot 3 \cdot \left(\frac{5n^2}{9} + \frac{n}{3}\right) \cdot \frac{1}{2n} = \frac{5}{6} + O(1/n),
\end{align*}
as desired.

We will next prove that, for any non-opposite cut $P$, $\cost(P, w) \geqs 1$. From Observation~\ref{obs:char-nonop}, we can assume that $P$ is either a ball cut or a corner cut. We will prove that $\cost(P, w) \geqs 1$ by a potential function argument. Recall that each node in the dual graph is either $O_1, O_2, O_3$ or a triangle. We represent each triangle by its middle point (median). For each $i$, we define the potential function $\Phi_i$ on $O_i$ and all the triangles as follows.
\begin{align*}
\Phi_i(F) =
\begin{cases}
0 & \text{ if } F = O_i, \\
(4n/3)\rho & \text{ if } x \in T_i, \\
(n/3 + n(x_i - x_w))\rho & \text{ if } x \in T_j, \\
(n/3 + n(x_i - x_j))\rho & \text{ if } x \in T_w, \\
\lceil 2nx_i \rceil\rho & \text{ if } x \in H,
\end{cases}
\end{align*}
where $\{i,j,w\} = \{1,2,3\}$ and $x = (x_1, x_2, x_3)$ is the middle point of $F$. An example illustrating the potential function can be found in Figure~\ref{fig:pot}.

\begin{figure}[h!]
    \centering
    \resizebox{0.7\textwidth}{!}{\input{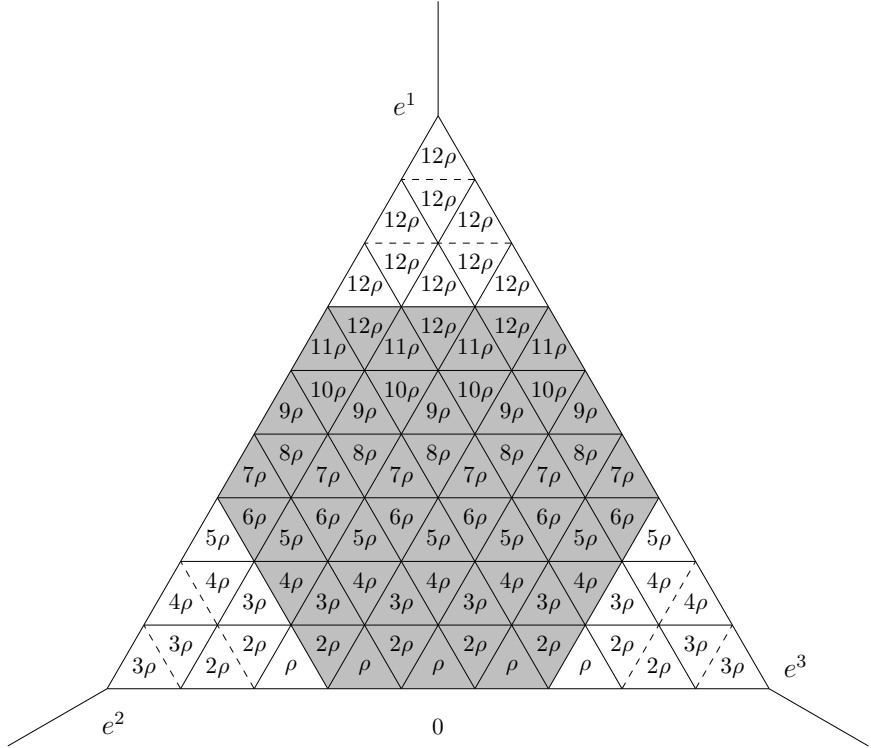}}
   	\caption{The potential function $\Phi_1$ when $n = 9$.}
    \label{fig:pot}
\end{figure}

It is easy to check that, for any two triangles $F_1, F_2$ that share an edge, the difference $|\Phi_i(F_1) - \Phi_i(F_2)|$ is no more than the weight of the shared edge. The same remains true even when one of $F_1, F_2$ is $O_i$. This implies that, in the dual graph, the shortest path from $O_i$ to any triangle $F$ is at least $\Phi_i(F)$. With these observations in mind, we are ready to show that $\cost(P, w) \geqs 1$. Let us consider the two cases based on whether $P$ is a corner cut or a ball cut.

\emph{Case 1.} Suppose that $P$ is a corner cut. Observe that, for any $j \in [3] \setminus \{i\}$ and for any triangle $F$ sharing an edge with the outer face $O_j$, we always have $\Phi_i(F)$ plus the weight of the shared edge being at least $(2n/3)\rho = 1/3$. Hence, the shortest path from $O_i$ to $O_j$ in the dual graph has weight at least 1/3. Since $P$ contains three paths among $O_1, O_2, O_3$, the cost of $P$ is at least 1.

\emph{Case 2.} Suppose that $P$ is a ball cut. Let $F$ be the triangle which the three paths to $O_1, O_2, O_3$ in $P$ originate from. To show that $P$ has cost at least one, it is enough to show that the total length of the shortest paths from $F$ to $O_1, O_2$ and $O_3$ is at least one. Since these shortest paths are lower bounded by the potential functions, we only need to prove that $\Phi_1(F) + \Phi_2(F) + \Phi_3(F) \geqs 1$.

Let $x$ be the middle point of $F$. If $x \in H$, then 
\begin{align*}
\Phi_1(F) + \Phi_2(F) + \Phi_3(F) = \lceil 2nx_1 \rceil\rho + \lceil 2nx_2 \rceil\rho + \lceil 2nx_3 \rceil\rho \geqs (2nx_1)\rho + (2nx_2)\rho + (2nx_3)\rho = 1.
\end{align*}

On the other hand, if $x \notin H$, assume without loss of generality that $x \in T_1$. We have
\begin{align*}
\Phi_1(F) + \Phi_2(F) + \Phi_3(F) = (4n/3)\rho + \left(n/3 + n(x_2 - x_3) \right)\rho + \left(n/3 + n(x_3 - x_2) \right)\rho = (2n)\rho = 1.
\end{align*}

In all cases, we have $\cost(P, w) \geqs 1$, thus completing the proof of Lemma~\ref{lem:3sim}.
\end{proofof}

\section{An Integrality Gap of $6/5 - O(1/\sqrt{k})$} \label{sec:rootk}

We will now use the integrality gap from Lemma~\ref{lem:3sim} to construct an integrality gap for Multiway Cut with $k$ terminals, for $k \geqs 3$. The vertex set and the edge set of the graph will be $\Delta_{k, n} \subseteq \Delta_k$ and $E_{k, n} = \{(x, y) \in \Delta_{k, n} \times \Delta_{k, n} \mid \|x - y\|_1 = 2/n\}$ respectively. The goal of this section is to prove the following lemma.

\begin{lemma} \label{lem:rootk}
For every $n$ divisible by 3 and every $k \geqs 3$, there exists $\ow: E_{k, n} \rightarrow \Rz$ such that
\begin{itemize}
\item The value of the canonical LP solution of $(\Delta_{k, n}, E_{k, n}, w)$ is at most $5/6 + O(1/n)$.
\item The cost of every $k$-way cut of the graph is at least $1 - O(n/k)$.
\end{itemize}
\end{lemma}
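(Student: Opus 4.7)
The plan is to construct $\bar w$ by symmetrizing the three-dimensional weight function $w_3$ from Lemma~\ref{lem:3sim} across all embeddings of $\Delta_{3,n}$ as a 3-face of $\Delta_{k,n}$. For every ordered triple $T = (i_1, i_2, i_3)$ of distinct elements of $[k]$, let $\phi_T : \Delta_{3,n} \to \Delta_{k,n}$ denote the $\ell_1$-isometry $\phi_T(x) = x_1 e^{i_1} + x_2 e^{i_2} + x_3 e^{i_3}$; it maps $E_{3,n}$ bijectively onto the edges of the corresponding 3-face. Define $w_3^T(u,v) = w_3(\phi_T^{-1}(u), \phi_T^{-1}(v))$ when both $u$ and $v$ lie in the image of $\phi_T$, and $w_3^T(u,v) = 0$ otherwise, and set
\begin{align*}
\bar w \;=\; \frac{1}{k(k-1)(k-2)} \sum_T w_3^T.
\end{align*}
Because $w_3$ is invariant under permutations of the three coordinates, averaging over ordered triples is natural, and by linearity and the isometry of each $\phi_T$ we immediately get $\lpc(\bar w) = \lpc(w_3) = 5/6 + O(1/n)$, giving the first bullet.

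For the second bullet, fix any $k$-way cut $P$ of $(\Delta_{k,n}, E_{k,n})$. For each $T$, define an induced partition $P_T : \Delta_{3,n} \to [4]$ by setting $P_T(x) = j$ if $P(\phi_T(x)) = i_j \in T$, and $P_T(x) = 4$ otherwise. Call $x$ \emph{$T$-bad} if $P_T(x) \in [3] \setminus \supp(x)$, and let $\tilde P_T$ be obtained from $P_T$ by reassigning every $T$-bad point to $4$. Since each terminal satisfies $P(\phi_T(e^j)) = i_j$, it is never $T$-bad and $\tilde P_T(e^j) = j$, so $\tilde P_T$ is a valid non-opposite cut; Lemma~\ref{lem:3sim} gives $\cost(\tilde P_T, w_3) \geqs 1$. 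A short case analysis on each edge (both endpoints bad, exactly one bad, neither bad) yields
\begin{align*}
\ind[\tilde P_T(x) \neq \tilde P_T(y)] \;\leqs\; \ind[P(\phi_T(x)) \neq P(\phi_T(y))] \;+\; \ind[\{x, y\} \text{ contains a } T\text{-bad point}],
\end{align*}
and summing this against $w_3$ gives $1 \leqs \cost(P, w_3^T) + B_T$, where $B_T$ is the total $w_3$-weight of edges of $E_{3,n}$ incident to some $T$-bad point. Averaging over all $T$ (and noting that $\sum_T \cost(P, w_3^T) = k(k-1)(k-2)\cost(P, \bar w)$ by the change of variables $u=\phi_T(x)$) produces
\begin{align*}
\cost(P, \bar w) \;\geqs\; 1 - \frac{1}{k(k-1)(k-2)} \sum_T B_T.
\end{align*}

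The main obstacle is bounding $\sum_T B_T$. Terminals and fully supported points are never bad, so only $x$ with $|\supp(x)| = 2$ contribute. The key observation is that if, say, $\supp(x) = \{1,2\}$, then $x$ is $T$-bad precisely when $i_3 = P(x_1 e^{i_1} + x_2 e^{i_2})$, a value determined entirely by $(i_1, i_2)$; hence at most $k(k-1)$ of the $k(k-1)(k-2)$ ordered triples render $x$ bad, and summing over the three choices of $\supp(x)$ still leaves only $O(k^2)$ bad triples per point. Using the trivial bound $B_T \leqs 2 \sum_{x} \ind[x \text{ is } T\text{-bad}] \sum_{y:(x,y)\in E_{3,n}} w_3(x,y)$, together with $\sum_{(x,y) \in E_{3,n}} w_3(x,y) = n \cdot \lpc(w_3) = O(n)$, one concludes $\sum_T B_T = O(k^2 n)$, whence the error term is $O(n/k)$, completing the proof.
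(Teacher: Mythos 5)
Your proof is correct and follows essentially the same route as the paper: average the triangle gap over all 3-faces, lower-bound the cost on each face via Lemma~\ref{lem:3sim} for non-opposite projections, and show the ``bad'' boundary-point contribution is $O(n/k)$ using the observation that, for $|\supp(x)|=2$ and a fixed ordered pair $(i_1,i_2)$, at most one choice of $i_3$ makes $x$ bad. The only cosmetic difference is bookkeeping---the paper (Proposition~\ref{prop:proj}) conditions on the projected cut already being non-opposite and simply discards the contribution of the bad event, while you explicitly repair $P_T$ to a non-opposite cut $\tilde P_T$ and charge the repair weight $B_T$; the underlying $O(n/k)$ estimate is the same.
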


By setting $n = \sqrt{k}$, we have an integrality gap of $6/5 - O(1/\sqrt{k})$. While the dependency on $k$ here is still worse than that in Theorem~\ref{thm:main}, the construction and proof encapsulate the main ideas of this paper. The gap from this section will finally be slightly tweaked in the next section to yield the right dependency on $k$.

The core of the proof of Lemma~\ref{lem:rootk} is the following proposition, which is essentially equivalent to $\tilde \tau^*_{3, n} - O(n/k) \leqs \tau^*_k$, albeit stated in more convenient integrality gap terminologies.

\begin{proposition} \label{prop:proj}
Let $P$ be any $k$-way cut of $\Delta_{k, n}$ for any $k \geqs 3$. For any $\{i_1, i_2, i_3\} \subseteq [k]$, let $P_{\{i_1, i_2, i_3\}}$ be as defined in the introduction. If $i_1, i_2, i_3$ are 3 randomly selected distinct elements of $[k]$, then $$\Pr_{i_1, i_2, i_3}[P_{\{i_1, i_2, i_3\}} \text{ is non-opposite}] \geqs 1 - O(n/k).$$
\end{proposition}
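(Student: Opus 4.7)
The plan is to formalize the sketch given in the introduction for the case $K = k$ via a union bound over ``bad'' points. Call a point $x \in \Delta_{3, n}$ \emph{bad} (with respect to the random choice of $i_1, i_2, i_3$) if $P_{\{i_1, i_2, i_3\}}(x) \in [3] \setminus \supp(x)$. By definition $P_{\{i_1, i_2, i_3\}}$ is non-opposite if and only if no $x \in \Delta_{3, n}$ is bad, so it suffices to show that $\Pr_{i_1, i_2, i_3}[\text{some } x \in \Delta_{3, n} \text{ is bad}] \leqs O(n/k)$.

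The first step is to argue that only points with $|\supp(x)| = 2$ can possibly be bad. When $|\supp(x)| = 3$ the set $[3] \setminus \supp(x)$ is empty, so the condition cannot hold. When $|\supp(x)| = 1$, say $x = e^j$, one has $x_1 e^{i_1} + x_2 e^{i_2} + x_3 e^{i_3} = e^{i_j}$; the CKR terminal constraint then forces $P(e^{i_j}) = i_j \in \{i_1, i_2, i_3\}$, and hence $P_{\{i_1, i_2, i_3\}}(x) = f(i_j) = j \in \supp(x)$, so $x$ is not bad either.

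The main step is the case $|\supp(x)| = 2$, where I aim to bound the probability that a fixed such $x$ is bad by $1/(k-2)$. By symmetry I may assume $\supp(x) = \{1, 2\}$, so $x_3 = 0$ and $x_1 e^{i_1} + x_2 e^{i_2} + x_3 e^{i_3} = x_1 e^{i_1} + x_2 e^{i_2}$, which depends only on $(i_1, i_2)$. Conditioning on $(i_1, i_2)$, the value $P(x_1 e^{i_1} + x_2 e^{i_2}) \in [k]$ is determined, and by the definition of $P_{\{i_1, i_2, i_3\}}$, the event that $x$ is bad coincides with $\{P(x_1 e^{i_1} + x_2 e^{i_2}) = i_3\}$. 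Since, conditionally on $(i_1, i_2)$, the index $i_3$ is uniform over the $(k-2)$-element set $[k] \setminus \{i_1, i_2\}$, this conditional (and hence unconditional) probability is at most $1/(k - 2)$.

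To finish, I would apply a union bound over the candidate bad points, namely the $x \in \Delta_{3, n}$ with $|\supp(x)| = 2$; these are the grid points lying on the interiors of the three edges of $\Delta_3$, giving $3(n - 1)$ points in total. Combining with the per-point bound yields a failure probability of at most $3(n - 1)/(k - 2) = O(n/k)$, as required. I do not foresee a real obstacle here, since the argument is essentially the informal union bound already sketched in the introduction; the only subtlety is the conditioning order in the key case, which makes rigorous the intuition that once the two-dimensional face containing $x$ has been pinned down by $(i_1, i_2)$, the only remaining randomness is the ``extra'' index $i_3$, which has at most a $1/(k-2)$ chance of hitting the single forbidden value.
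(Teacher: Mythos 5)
Your proof is correct and follows essentially the same route as the paper: restrict attention to points $x$ with $|\supp(x)| = 2$, show that for a fixed such $x$ the probability of being bad is at most $1/(k-2)$ by conditioning on $(i_1, i_2)$ and using the uniformity of $i_3$ over $[k] \setminus \{i_1, i_2\}$, and then take a union bound over the $3(n-1)$ such grid points. The only (minor) addition over the paper's version is that you spell out explicitly why $|\supp(x)| \in \{1,3\}$ cannot produce a bad point, which the paper leaves implicit.
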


Before we prove the proposition, let us first use it to prove Lemma~\ref{lem:rootk}.

\begin{proofof}[Lemma~\ref{lem:rootk}]
Our gap is simply created by embedding the gap from Section~\ref{sec:3} to every triangular face of $\Delta_{k, n}$. More specifically, for each $\{i_1, i_2, i_3\} \in \binom{[k]}{3}$, the weights $\ow_{\{i_1, i_2, i_3\}}: E_{k, n} \rightarrow \Rz$ of the triangular integrality gap embedded to the face induced by $e^{i_1}, e^{i_2}, e^{i_3}$ can be defined as
\begin{align*}
\ow_{\{i_1, i_2, i_3\}}(x, y) =
\begin{cases}
w((x_{i_1}, x_{i_2}, x_{i_3}), (y_{i_1}, y_{i_2}, y_{i_3})) & \text{ if } \supp(x), \supp(y) \subseteq \{i_1, i_2, i_3\}, \\
0 & \text{ otherwise,}
\end{cases}
\end{align*}
where $w$ is the weight function from Lemma~\ref{lem:3sim}. Finally, the overall weighting scheme is just the average over all $\ow_{\{i_1, i_2, i_3\}}$'s, i.e., $$\ow = \E_{\{i_1, i_2, i_3\} \in \binom{[k]}{3}} \left[\ow_{\{i_1, i_2, i_3\}}\right].$$

Let us now compute the value of the canonical LP solution of $(\Delta_{k, n}, E_{k, n}, \ow)$. From the definition of $\ow$, $\lpc(\ow)$ is simply the average of $\lpc(\ow_{\{i_1, i_2, i_3\}})$ over all $\{i_1, i_2, i_3\} \in \binom{[k]}{3}$. Since $\lpc(\ow_{\{i_1, i_2, i_3\}}) = \lpc(w) = 5/6 + O(1/n)$, we can conclude that $\lpc(\ow) = 5/6 + O(1/n)$ as desired.

Next, we will argue that the cost of every $k$-way cut of the graph is at least $1 - O(n/k)$. Consider any $k$-way cut $P$. Before we formally prove that $\cost(P, \ow) \geqs 1 - O(n/k)$, let us give a short intuition behind the proof. First, by Proposition~\ref{prop:proj}, for at least $1 - O(n/k)$ fraction of $\{i_1, i_2, i_3\} \in \binom{[k]}{3}$, $P_{\{i_1, i_2, i_3\}}$ is non-opposite. From the second condition of Lemma~\ref{lem:3sim}, this means that the cost of $P$ with respect to $\ow_{\{i_1, i_2, i_3\}}$ is at least one. This implies that the total cost of $P$ with respect to $\ow$ is at least $1 - O(n/k)$.

More formally, the cost of $P$ can be written as
\begin{align*}
\cost(P, \ow) &= \E_{\{i_1, i_2, i_3\} \in \binom{[k]}{3}} [\cost(P, \ow_{\{i_1, i_2, i_3\}})] \\
(\text{from definitions of } \ow_{\{i_1, i_2, i_3\}}, P_{\{i_1, i_2, i_3\}}) &\geqs \E_{\{i_1, i_2, i_3\} \in \binom{[k]}{3}} [\cost(P_{\{i_1, i_2, i_3\}}, w)].
\end{align*}
Let $E_{\{i_1, i_2, i_3\}}$ be the event that $P_{\{i_1, i_2, i_3\}}$ is non-opposite. We can then lower bound the last expression by
\begin{align*}
\E_{\{i_1, i_2, i_3\} \in \binom{[k]}{3}} [\cost(P_{\{i_1, i_2, i_3\}}, w)] &\geqs \Pr[E_{\{i_1, i_2, i_3\}}] \cdot \E_{\{i_1, i_2, i_3\} \in \binom{[k]}{3}} [\cost(P_{\{i_1, i_2, i_3\}}, w) \mid E_{\{i_1, i_2, i_3\}}] \\
(\text{from Proposition}~\ref{prop:proj}) &\geqs \left(1 - O(n/ k)\right) \cdot \E_{\{i_1, i_2, i_3\} \in \binom{[k]}{3}} [\cost(P_{\{i_1, i_2, i_3\}}, w) \mid E_{\{i_1, i_2, i_3\}}] \\
&\geqs 1 - O(n/ k),
\end{align*}
where the last inequality comes from the second property of $w$ in Lemma~\ref{lem:3sim}.
\end{proofof}

We will next prove Proposition~\ref{prop:proj}; the proof is essentially the same as the proof of $\tilde \tau^*_{3, n} - O(n/k) \leqs \tau^*_k$ sketched earlier in the introduction.

\begin{proofof}[Proposition~\ref{prop:proj}]
By definition of non-opposite cuts of $\Delta_{3, n}$, if $P_{\{i_1, i_2, i_3\}}$ is not non-opposite, then there exists a point $x$ on one of the sides of $\Delta_{3, n}$ (i.e. $|\supp(x)| = 2$) such that $P_{\{i_1, i_2, i_3\}}(x) \notin \supp(x) \cup \{4\}$. Hence, by union bound, we have
\begin{align*}
\Pr_{i_1, i_2, i_3}[P_{\{i_1, i_2, i_3\}} \text{ is not non-opposite}] \leqs \sum_{x \in \Delta_{3, n}: \atop |\supp(x)| = 2} \Pr_{i_1, i_2, i_3}[P_{\{i_1, i_2, i_3\}}(x) \notin \supp(x) \cup \{4\}].
\end{align*}

Consider an $x \in \Delta_{3, n}$ with $|\supp(x)| = 2$. Assume without loss of generality that $\supp(x) = \{1, 2\}$. If $P_{\{i_1, i_2, i_3\}}(x) \notin \supp(x) \cup \{4\}$, then $P_{\{i_1, i_2, i_3\}}(x) = 3$. The latter occurs if and only if $P(x_1e^{i_1} + x_2e^{i_2} + x_3e^{i_3}) = i_3$. Fix $i_1, i_2$; this already determines $P(x_1e^{i_1} + x_2e^{i_2} + x_3e^{i_3})$. Since $i_3$ is now a random element from $[k] \setminus \{i_1, i_2\}$, the probability that $P(x_1e^{i_1} + x_2e^{i_2} + x_3e^{i_3}) = i_3$ is at most $1 / (k - 2)$. Hence, we have
\begin{align*}
\Pr_{i_1, i_2, i_3}[P_{\{i_1, i_2, i_3\}} \text{ is not non-opposite}] \leqs \sum_{x \in \Delta_{3, n}: \atop |\supp(x)| = 2} \frac{1}{k - 2} = \frac{3(n - 1)}{k - 2} = O(n/k),
\end{align*}
completing the proof of the proposition.
\end{proofof}

\section{Getting the Right Dependency on $k$} \label{sec:k}

We will modify our gap from the previous section to arrive at the following quantitatively better bound.

\begin{theorem} \label{thm:main-refined}
For every $n$ divisible by 3 and every $k \geqs 3$, there exists $\tilde w: E_{k, n} \rightarrow \Rz$ such that
\begin{itemize}
\item The value of the canonical LP solution of $(\Delta_{k, n}, E_{k, n}, \tilde w)$ is at most $(5 + \frac{1}{k - 1})/6 + O(1/n)$.
\item The cost of every $k$-way cut of the graph is at least one.
\end{itemize}
\end{theorem}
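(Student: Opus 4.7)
The plan is to obtain $\tilde{w}$ by modifying the weight function $\ow$ of Lemma~\ref{lem:rootk}: scale up the weight of every edge lying on a one-dimensional segment $(e^{i}, e^{j})$ (i.e., between two terminals) by a factor of $1 + \frac{1}{k-1}$, while leaving all other edge weights unchanged. Equivalently, this can be viewed as defining $\tilde{w}_{\{i_1,i_2,i_3\}}$ as $\ow_{\{i_1,i_2,i_3\}}$ with the three side-of-triangle weights scaled by the same factor, and then taking $\tilde{w}=\E_{\{i_1,i_2,i_3\} \in \binom{[k]}{3}}[\tilde{w}_{\{i_1,i_2,i_3\}}]$.

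For the LP value, I would compute the contribution of 1-D segment edges to $\lpc(\ow)$ directly from the structure of the 3-terminal gap of Lemma~\ref{lem:3sim}. A single side of a triangular face has total weight $(n^2/9 + 2n/3)\rho$ under $\ow_{\{i_1,i_2,i_3\}}$ (as computed in that lemma's proof), which contributes $\frac{1}{18}+O(1/n)$ to the LP value. Summing over all three sides of the face and averaging over $\binom{k}{3}$ triples, while noting that each 1-D segment lies in exactly $k-2$ triples, the total LP contribution of segment edges to $\lpc(\ow)$ is $\binom{k}{2} \cdot \frac{k-2}{\binom{k}{3}} \cdot \bigl(\frac{1}{18}+O(1/n)\bigr) = \frac{1}{6}+O(1/n)$. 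Scaling these edges by $1 + \frac{1}{k-1}$ thus increases the LP value by $\frac{1}{k-1} \cdot \bigl(\frac{1}{6}+O(1/n)\bigr) = \frac{1}{6(k-1)}+O(1/n)$, yielding $\lpc(\tilde{w}) = \frac{5 + 1/(k-1)}{6}+O(1/n)$ as required.

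For the cost bound, the plan is again face-by-face: $\cost(P,\tilde{w}) = \E_{\{i_1,i_2,i_3\}}[\cost(P,\tilde{w}_{\{i_1,i_2,i_3\}})]$. For each triple, split into two cases based on the induced cut $P_{\{i_1,i_2,i_3\}}$. If $P_{\{i_1,i_2,i_3\}}$ is non-opposite, Lemma~\ref{lem:3sim} gives $\cost(P,\ow_{\{i_1,i_2,i_3\}}) \geqs 1$, and since $\tilde{w}_{\{i_1,i_2,i_3\}} \geqs \ow_{\{i_1,i_2,i_3\}}$ edgewise, cost remains at least 1. If $P_{\{i_1,i_2,i_3\}}$ is opposite, there is a point on a 1-D segment $(e^{i_a},e^{i_b})$ assigned by $P$ to the opposite vertex $e^{i_c}$, which forces at least two cut edges on that segment. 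The extra weight $\frac{1}{k-1}$ times the segment weights on these two cuts, accumulated across bad triples, should cover the loss of the non-opposite bound on bad triples.

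The main obstacle is the opposite-triple case: one must verify that the specific scaling factor $1 + \frac{1}{k-1}$ is exactly enough to push the averaged cost to at least 1. I expect this will require extending the potential-function argument of Lemma~\ref{lem:3sim} to handle the scaled segment weights—defining modified potentials $\tilde{\Phi}_i$ that absorb the extra segment mass and re-verifying the inequality $|\tilde{\Phi}_i(F_1) - \tilde{\Phi}_i(F_2)| \leqs \tilde{w}(\text{shared edge})$, and then showing that every candidate optimal cut structure of the face (ball cuts, 3-corner cuts, and the opposite 2-corner-like cuts arising when an external terminal is not used) has cost at least 1 under the new potential. Equivalently, one may check a per-triple claim that every ``$k$-way restriction'' of $P$ to the face, including using external terminals as a distinguished fourth cluster, incurs cost at least 1 against $\tilde{w}_{\{i_1,i_2,i_3\}}$ on average.
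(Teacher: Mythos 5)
Your LP-value computation is correct, but the proposed $\tilde w$ fails the cost bound, and the construction is not the paper's. The obstruction is structural: $\ow$ puts only $O(1/n)$ weight on the segment edges that a cheapest 2-corner cut crosses, so scaling segment edges by the $n$-independent factor $1 + \frac{1}{k-1}$ cannot compensate for a $\Theta(1)$ per-face deficit. Concretely, let $P$ be the $k$-way cut that assigns $x$ to $j$ whenever $x_j \geq 2/3 + 1/n$ for some $j \geq 2$, and to $1$ otherwise. On every face containing terminal $1$ the projection $P_{\{1,i_2,i_3\}}$ is a cheapest 2-corner cut (cost $2/3$ against $w$, crossing only hexagon-boundary edges of weight $\rho = \tfrac{1}{2n}$), and on every other face it is a 3-corner cut of cost exactly $1$, so $\cost(P,\ow) = \tfrac{3}{k}\cdot\tfrac{2}{3} + \bigl(1-\tfrac{3}{k}\bigr) = 1 - \tfrac{1}{k}$. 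The cut segment edges number $(k-1)^2$, each of $\ow$-weight $\tfrac{6\rho}{k(k-1)}$, so their total $\ow$-weight is $\tfrac{3(k-1)}{nk}$, and
\begin{align*}
\cost(P,\tilde w) \;=\; \cost(P,\ow) + \tfrac{1}{k-1}\cdot\tfrac{3(k-1)}{nk} \;=\; 1 - \tfrac{1}{k} + \tfrac{3}{nk} \;<\; 1
\end{align*}
for every admissible $n \geq 6$. No $n$-independent rescaling of $\ow$ can repair this.

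The paper's construction is genuinely different at exactly this point. It does not rescale $\ow$'s segment weights at all; it instead introduces a \emph{separate, position-uniform} weight $w'$ (Lemma~\ref{lem:edgegap}) that places $1/\binom{k}{2}$ on every segment edge. Uniformity is what gives the clean bound $\cost(P,w') \geq D(P)-1$, where $D(P)$ is the average number of distinct clusters appearing on a segment. Paired with the refined bound $\cost(P,\ow) \geq 1 - \tfrac{D(P)-2}{k-2}$ (which requires Proposition~\ref{prop:proj-refined}, Observation~\ref{obs:3way}, and Lemma~\ref{lem:rootk-refined}), the mixture $\tw = \tfrac{k-2}{k-1}\ow + \tfrac{1}{k-1}w'$ makes the two $D(P)$-linear terms cancel and yields $\cost(P,\tw) \geq 1$ for every $P$; on the cut $P$ above, both bounds are tight with $D(P) = \tfrac{3k-2}{k}$ and the cost is exactly $1$. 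Your proposal is missing the quantity $D(P)$, both complementary lemmas, and the uniform $w'$, and the potential-function extension you gesture at cannot substitute, since after your rescaling the minimum 2-corner face cost remains bounded away from $1$.
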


Clearly, by taking $n$ sufficiently large (i.e. $n = \Omega(1/\varepsilon)$), Theorem~\ref{thm:main-refined} implies Theorem~\ref{thm:main} since the value of the canonical LP solution is no more than the value of the optimal LP solution.

The rest of the section is arranged as follows. First, in the first subsection, we refine our analysis from Section~\ref{sec:rootk} to get improved bounds for the costs of $k$-way cuts against the weight function. Unfortunately, this will not be enough to prove Theorem~\ref{thm:main-refined} and we will have to provide another simple gap to complement worst cases for the gap from Section~\ref{sec:rootk}; this is done in Subsection~\ref{subsec:edgegap}. Finally, we take an appropriate linear combination of the two gaps and prove Theorem~\ref{thm:main-refined} in the last subsection.

\subsection{Refined Analysis for the Gap from Section~\ref{sec:rootk}} \label{subsec:refined}

There are two main points of the analysis from the previous section that need to be tightened in order to get the bound in Theorem~\ref{thm:main-refined}. First, notice that the bound in Proposition~\ref{prop:proj} can be improved if, for each two vertices $e^i, e^j$ of the simplex, the points on the line between $e^i$ and $e^j$ are assigned to few clusters. For instance, in the extreme case where every point on the line from $e^i$ to $e^j$ is assigned to either cluster $i$ or cluster $j$, $P_{\{i_1, i_2, i_3\}}$ is always non-opposite for every $i_1, i_2, i_3$. The following proposition gives a more refined bound on the probability that $P_{\{i_1, i_2, i_3\}}$ is non-opposite based on the number of clusters that the points on the line from $e^i$ to $e^j$ are assigned to for every $i, j \in [k]$.

\begin{proposition} \label{prop:proj-refined}
Let $P$ be any $k$-way cut of $\Delta_{k, n}$ for any $k \geqs 3$. For each $\{i, j\} \subseteq [k]$, let us define $D_{i, j}(P)$ to be the set of all clusters that the points on the line between $e^i$ and $e^j$ (inclusive) are assigned to, i.e., $$D_{i, j}(P) = \{P(x) \mid x \in \Delta_{k, n}, \supp(x) \subseteq \{i, j\}\}.$$ Finally, let $D(P)$ be the average of the size of $D_{i, j}(P)$ among all $\{i, j\} \subseteq [k]$, i.e., $D(P) = \E_{\{i, j\} \subseteq [k]}[|D_{i, j}(P)|]$. If $i_1, i_2, i_3$ are three randomly selected distinct elements of $[k]$, then $$\Pr_{i_1, i_2, i_3}[P_{\{i_1, i_2, i_3\}} \text{ is non-opposite}] \geqs 1 - 3(D(P) - 2)/(k - 2).$$
\end{proposition}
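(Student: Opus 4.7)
The plan is to follow the same union-bound template as in the proof of Proposition~\ref{prop:proj}, but to union-bound over the three sides of $\Delta_{3,n}$ rather than over all $3(n-1)$ boundary points, thereby extracting $|D_{i,j}(P)|$ in place of the trivial bound $n$ that was used implicitly.

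First I will decompose the failure event. Recall that $P_{\{i_1,i_2,i_3\}}$ fails to be non-opposite iff there exists $x \in \Delta_{3,n}$ with $|\supp(x)| = 2$ and $P_{\{i_1,i_2,i_3\}}(x) \in [3] \setminus \supp(x)$. Unioning over the three possible values $\supp(x) \in \{\{1,2\},\{1,3\},\{2,3\}\}$, it suffices to bound each side's contribution separately. Consider the side $\supp(x) = \{1,2\}$: by the definition of $P_{\{i_1,i_2,i_3\}}$, the failure condition $P_{\{i_1,i_2,i_3\}}(x) = 3$ is equivalent to the existence of some $x \in \Delta_{3,n}$ with $\supp(x) = \{1,2\}$ such that $P(x_1 e^{i_1} + x_2 e^{i_2}) = i_3$. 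Since $i_3 \notin \{i_1, i_2\}$, and since $i_1, i_2 \in D_{i_1, i_2}(P)$ always (because $P(e^{i_\ell}) = i_\ell$ and $e^{i_1}, e^{i_2}$ have support contained in $\{i_1, i_2\}$), one readily checks that this failure is equivalent to the clean condition $i_3 \in D_{i_1,i_2}(P) \setminus \{i_1, i_2\}$.

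Next I will average over the random choice. Conditioning on the pair $\{i_1, i_2\}$, the index $i_3$ is uniform over $[k] \setminus \{i_1, i_2\}$, so the conditional failure probability for this side equals $|D_{i_1, i_2}(P) \setminus \{i_1, i_2\}|/(k-2) = (|D_{i_1, i_2}(P)| - 2)/(k - 2)$. Averaging over the uniformly random pair $\{i_1, i_2\} \in \binom{[k]}{2}$ yields exactly $(D(P) - 2)/(k-2)$. By symmetry (permuting the roles of the three indices), the sides $\{1,3\}$ and $\{2,3\}$ contribute the same bound, and the final union bound gives that the total failure probability is at most $3(D(P) - 2)/(k-2)$, which is the desired inequality. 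I do not anticipate a serious obstacle here: the only real content is the decision to union-bound by \emph{side} rather than by \emph{point}, and verifying the equivalence to membership in $D_{i_1,i_2}(P)\setminus\{i_1,i_2\}$; once that is done, the probability calculation and the symmetrization are entirely routine.
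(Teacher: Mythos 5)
Your proposal is correct and follows essentially the same route as the paper's proof: union-bound the failure event over the three sides of $\Delta_{3,n}$ (equivalently, over pairs $\{j_1,j_2\}\subseteq\{i_1,i_2,i_3\}$), for a fixed pair $\{i_1,i_2\}$ express the failure of that side as the event $i_3\in D_{i_1,i_2}(P)\setminus\{i_1,i_2\}$, compute the conditional probability $(|D_{i_1,i_2}(P)|-2)/(k-2)$, and then average over the pair and symmetrize. The only cosmetic difference is that the paper phrases the union bound directly over pairs $\{j_1,j_2\}\subseteq\{i_1,i_2,i_3\}$ while you phrase it over the sides of $\Delta_{3,n}$; these are the same decomposition.
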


\begin{proof}
We want to bound the probability that $P_{\{i_1, i_2, i_3\}}$ is not non-opposite. Again, this happens only when there exists $x$ such that $\supp(x) \subseteq \{i_1, i_2, i_3\}$ and $P(x) \in \{i_1, i_2, i_3\} \setminus \supp(x)$. Note that such $x$ can only have $|\supp(x)| = 2$. Thus, by union bound, we have
\begin{align*}
&\Pr_{i_1, i_2, i_3}[P_{\{i_1, i_2, i_3\}} \text{ is not non-opposite}] \\
&\leqs \sum_{\{j_1, j_2\} \subseteq \{i_1, i_2, i_3\}}\Pr_{i_1, i_2, i_3}[\exists x, \supp(x) = \{j_1, j_2\} \text{ and } P(x) \in \{i_1, i_2, i_3\} \setminus \{j_1, j_2\}] \\
&= 3 \Pr_{i_1, i_2, i_3}[\exists x, \supp(x) = \{i_1, i_2\} \text{ and } P(x) = i_3],
\end{align*}
where the last equality comes from symmetry. Now, let us fix $i_1, i_2$. There is an $x$ with $\supp(x) = \{i_1, i_2\}$ such that $P(x) = i_3$ if and only if $i_3 \in D_{i_1, i_2}(P)$. Observe that $i_1, i_2 \in D_{i_1, i_2}(P)$. Hence, the probability that $i_3$, a uniformly random element from $[k] \setminus \{i_1, i_2\}$, lies in $D_{i_1, i_2}(P)$ is exactly $(|D_{i_1, i_2}(P)| - 2)/(k - 2)$. Thus,
\begin{align*}
\Pr_{i_1, i_2, i_3}[\exists x, \supp(x) = \{i_1, i_2\} \text{ and } P(x) = i_3] &= \E_{i_1, i_2}[\Pr_{i_3}[\exists x, \supp(x) = \{i_1, i_2\} \text{ and } P(x) = i_3]] \\
&= \E_{i_1, i_2}[(|D_{i_1, i_2}(P)| - 2)/(k - 2)] \\
&= (D(P) - 2)/(k - 2).
\end{align*}

Hence, $\Pr_{i_1, i_2, i_3}[P_{\{i_1, i_2, i_3\}} \text{ is non-opposite}] \geqs 1 - 3(D(P) - 2)/(k - 2)$ as desired.
\end{proof}

Another point in our analysis that was not tight was that, in the proof of Lemma~\ref{lem:rootk}, we only used the fact that $\cost(P, \ow_{\{i_1, i_2, i_3\}})$ is at least one when $P_{\{i_1, i_2, i_3\}}$ is non-opposite. However, even when $P_{\{i_1, i_2, i_3\}}$ is not non-opposite, $\cost(P, \ow_{\{i_1, i_2, i_3\}})$ is non-zero and we should include this in our argument, too. To do so, we need the following observation.

\begin{observation} \label{obs:3way}
Let $w$ be as defined in Lemma~\ref{lem:3sim}. The cost of any 3-way cut of $(\Delta_{3, n}, E_{3, n})$ with respect to $w$ is at least 2/3. 
\end{observation}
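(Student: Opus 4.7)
The plan is to leverage the characterization in Observation~\ref{obs:char-3way} and then re-use the potential function machinery already developed for the proof of Lemma~\ref{lem:3sim}. Since Observation~\ref{obs:char-3way} guarantees that the minimum cost over all 3-way cuts is attained by a cut that is either a ball cut or a 2-corner cut, it suffices to verify the bound of $2/3$ on these two restricted classes.

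For the ball-cut case, I would simply invoke Case 2 of the proof of Lemma~\ref{lem:3sim} verbatim: the argument given there depends only on the structural decomposition of a ball cut into three non-intersecting paths from a central triangle $F$ to the outer faces $O_1, O_2, O_3$, together with the inequality $\Phi_1(F) + \Phi_2(F) + \Phi_3(F) \geqs 1$, and in no place does it use that the cut maps into $[4]$ rather than $[3]$. Hence it already yields $\cost(P, w) \geqs 1 \geqs 2/3$ for every ball cut.

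For the 2-corner case, I would re-use the potential function computation from Case 1 of the same proof. Its crux is the statement that for any two distinct $i, j \in \{1,2,3\}$, the shortest path in the dual graph from $O_i$ to $O_j$ has weight at least $(2n/3)\rho = 1/3$. Indeed, if $F$ is the last triangle on such a path before reaching $O_j$, then the path has weight at least $\Phi_i(F)$ plus the weight of the edge from $F$ to $O_j$, and this sum was shown to be at least $1/3$. By definition, a 2-corner cut consists of two non-intersecting (in particular, edge-disjoint) paths, each joining two of the outer faces $O_1, O_2, O_3$, so the sum of their costs is at least $2 \cdot 1/3 = 2/3$.

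There is no real obstacle: the entire argument is essentially a reorganization of components already established for Lemma~\ref{lem:3sim}. The only conceptual point worth noting is why the lower bound here is exactly $2/3$ rather than $1$. This is precisely because a 2-corner cut has one fewer path between outer faces than a 3-corner cut, and each such missing path would have contributed a further $1/3$ to the cost. This loss is also precisely what makes the observation useful later, when the refined analysis needs a non-trivial lower bound on $\cost(P, w)$ even in the case where $P_{\{i_1,i_2,i_3\}}$ fails to be non-opposite.
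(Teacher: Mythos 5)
Your proposal is correct and follows essentially the same route as the paper: invoke Observation~\ref{obs:char-3way} to reduce to ball cuts and 2-corner cuts, reuse the ball-cut bound of $1$ from Case~2 of the Lemma~\ref{lem:3sim} proof, and reuse the $1/3$ lower bound on the $O_i$-to-$O_j$ shortest-path cost from Case~1 to get $2 \cdot 1/3 = 2/3$ for a 2-corner cut. The paper states this more tersely (deducing the 2-corner bound ``immediately'' from the 3-corner case), but the underlying inequality is exactly the shortest-path bound you spell out.
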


To see that the observation is true, recall Observation~\ref{obs:char-3way}; to prove that any 3-way cut of $(\Delta_{3, n}, E_{3, n})$ has cost at least 2/3, we only need to consider ball cuts and 2-corner cuts. However, we have shown in Subsection~\ref{subsec:int-nonop} that the cost of any ball cut and 3-corner cut with respect to $w$ is at least one. This immediately implies that the cost of any 2-corner cut is at least 2/3, meaning that Observation~\ref{obs:3way} is indeed true.

With the above proposition and the observation, we can now prove a refined version of Lemma~\ref{lem:rootk}, as stated formally below. The proof is very similar to that of the original lemma.

\begin{lemma} \label{lem:rootk-refined}
For $\ow$ as defined in Lemma~\ref{lem:rootk} and for any $k$-way cut $P$, we have $\cost(P, \ow) \geqs 1 - (D(P) - 2) / (k - 2)$.
\end{lemma}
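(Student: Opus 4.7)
The plan is to retrace the proof of Lemma~\ref{lem:rootk} but to split the expectation over $\{i_1,i_2,i_3\}$ into two cases according to whether $P_{\{i_1,i_2,i_3\}}$ is non-opposite, using Proposition~\ref{prop:proj-refined} to control the probability of the ``bad'' case and Observation~\ref{obs:3way} to obtain a nontrivial lower bound even on that bad event. As before, I first write
\begin{align*}
\cost(P, \ow) = \E_{\{i_1, i_2, i_3\} \in \binom{[k]}{3}} \bigl[\cost(P, \ow_{\{i_1, i_2, i_3\}})\bigr] \geqs \E_{\{i_1, i_2, i_3\}} \bigl[\cost(P_{\{i_1, i_2, i_3\}}, w)\bigr],
\end{align*}
the inequality following from the definitions of $\ow_{\{i_1,i_2,i_3\}}$ and $P_{\{i_1,i_2,i_3\}}$: every edge of the embedded face that is cut by $P_{\{i_1,i_2,i_3\}}$ is also cut by $P$ in the ambient graph.

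Next I let $\mathcal{E}_{\{i_1,i_2,i_3\}}$ denote the event that $P_{\{i_1,i_2,i_3\}}$ is non-opposite and break the expectation as
\begin{align*}
\E\bigl[\cost(P_{\{i_1,i_2,i_3\}}, w)\bigr] = \Pr[\mathcal{E}]\cdot\E\bigl[\cost(P_{\{i_1,i_2,i_3\}}, w)\,\big|\,\mathcal{E}\bigr] + \Pr[\neg\mathcal{E}]\cdot\E\bigl[\cost(P_{\{i_1,i_2,i_3\}}, w)\,\big|\,\neg\mathcal{E}\bigr].
\end{align*}
On $\mathcal{E}$, the second bullet of Lemma~\ref{lem:3sim} gives $\cost(P_{\{i_1,i_2,i_3\}},w)\geqs 1$. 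On $\neg\mathcal{E}$, $P_{\{i_1,i_2,i_3\}}$ is a (possibly 4-valued) map $\Delta_{3,n}\to[4]$ sending $e^j$ to $j$ for $j\in[3]$; merging the fourth cluster into, say, cluster 1 turns it into a valid 3-way cut whose cost is at most that of $P_{\{i_1,i_2,i_3\}}$, so Observation~\ref{obs:3way} yields $\cost(P_{\{i_1,i_2,i_3\}},w)\geqs 2/3$.

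Combining these two bounds with Proposition~\ref{prop:proj-refined}, which gives $\Pr[\neg\mathcal{E}]\leqs 3(D(P)-2)/(k-2)$, and setting $p := 3(D(P)-2)/(k-2)$, I conclude
\begin{align*}
\cost(P,\ow) \geqs (1-p)\cdot 1 + p\cdot \tfrac{2}{3} = 1 - \tfrac{p}{3} = 1 - \frac{D(P)-2}{k-2},
\end{align*}
which is exactly the statement of Lemma~\ref{lem:rootk-refined}. The proof is essentially mechanical once Proposition~\ref{prop:proj-refined} and Observation~\ref{obs:3way} are in hand; the only subtle point to double-check is that merging cluster~$4$ into a terminal cluster produces a valid 3-way cut to which Observation~\ref{obs:3way} applies, but since $P_{\{i_1,i_2,i_3\}}(e^j)=j$ for all $j\in[3]$ and merging only identifies clusters, this is straightforward and the $2/3$ bound goes through cleanly.
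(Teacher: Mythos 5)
Your proof is correct and follows essentially the same route as the paper's: decompose $\cost(P,\ow)$ as an expectation over faces, split on the event that $P_{\{i_1,i_2,i_3\}}$ is non-opposite, and apply Lemma~\ref{lem:3sim}, Observation~\ref{obs:3way}, and Proposition~\ref{prop:proj-refined} to the two cases respectively. Your parenthetical remark about merging cluster~$4$ into a terminal cluster before invoking Observation~\ref{obs:3way} is a small point the paper leaves implicit, and it is a correct and worthwhile clarification, since $P_{\{i_1,i_2,i_3\}}$ is a map into $[4]$ rather than a genuine $3$-way cut.
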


\begin{proof}
Recall that we have
\begin{align*}
\cost(P, \ow) \geqs \E_{\{i_1, i_2, i_3\} \subseteq [k]} [\cost(P_{\{i_1, i_2, i_3\}}, w)].
\end{align*}
Again, let $E_{\{i_1, i_2, i_3\}}$ be the event that $P_{\{i_1, i_2, i_3\}}$ is non-opposite. We can then lower bound $\cost(P, \ow)$ as follows.
\begin{align*}
\E_{\{i_1, i_2, i_3\} \subseteq [k]} [\cost(P_{\{i_1, i_2, i_3\}}, w)] =& \Pr[E_{\{i_1, i_2, i_3\}}] \cdot \E_{\{i_1, i_2, i_3\} \subseteq [k]} [\cost(P_{\{i_1, i_2, i_3\}}, w) \mid E_{\{i_1, i_2, i_3\}}] + \\
&\Pr[\neg E_{\{i_1, i_2, i_3\}}] \cdot \E_{\{i_1, i_2, i_3\} \subseteq [k]} [\cost(P_{\{i_1, i_2, i_3\}}, w) \mid \neg E_{\{i_1, i_2, i_3\}}] \\
(\text{from Lemma}~\ref{lem:3sim} \text{ and Observation}~\ref{obs:3way}) \geqs& \Pr[E_{\{i_1, i_2, i_3\}}] + \frac{2}{3} \cdot \Pr[\neg E_{\{i_1, i_2, i_3\}}] \\
=& 1 - \frac{1}{3} \cdot \Pr[\neg E_{\{i_1, i_2, i_3\}}] \\
(\text{from Proposition}~\ref{prop:proj-refined}) \geqs& 1 - \frac{D(P) - 2}{k - 2}.
\end{align*}
\end{proof}

\subsection{An Integrality Gap Against Large $D(P)$} \label{subsec:edgegap}

As seen in the previous subsection, the weight $\ow$ works well against cuts $P$ with small $D(P)$ but it does not work well against $P$ with large $D(P)$. As a result, we need to construct a different integrality gap instance that deals with this case. Formally, the new gap will have the following properties.

\begin{lemma} \label{lem:edgegap}
For every $n \geqs 2$ and every $k \geqs 2$, there exists $w': E_{k, n} \rightarrow \Rz$ such that
\begin{itemize}
\item The value of the canonical LP solution of $(\Delta_{k, n}, E_{k, n}, w')$ is exactly one.
\item The cost of every $k$-way cut $P$ of the graph is at least $D(P) - 1$.
\end{itemize}
\end{lemma}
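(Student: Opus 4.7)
\textbf{Proof proposal for Lemma~\ref{lem:edgegap}.} The plan is to put all of the weight on the $\binom{k}{2}$ ``simplex edges,'' i.e.\ the line segments $[e^i, e^j]$ of $\Delta_{k,n}$ joining pairs of terminals, distributing the weight uniformly along them. Concretely, I would define
\begin{align*}
w'(x, y) =
\begin{cases}
1/\binom{k}{2} & \text{if } \supp(x) \cup \supp(y) \subseteq \{i, j\} \text{ for some } \{i,j\} \subseteq [k], \\
0 & \text{otherwise.}
\end{cases}
\end{align*}
Note that any edge $(x,y) \in E_{k,n}$ with nonzero weight necessarily has both endpoints lying on a single line $[e^i, e^j]$ (since $\|x-y\|_1 = 2/n$ forces $x, y$ to be consecutive grid points on that line), so each nonzero edge is associated with a unique pair $\{i,j\}$.

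To compute the canonical LP value, observe that each of the $\binom{k}{2}$ lines $[e^i, e^j]$ contains exactly $n$ edges of $E_{k,n}$. Hence the total weight is $\binom{k}{2} \cdot n \cdot (1/\binom{k}{2}) = n$, and so $\lpc(w') = \frac{1}{n} \sum_{(x,y)} w'(x,y) = 1$, as required.

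For the cut cost, fix a $k$-way cut $P$ and a pair $\{i,j\} \subseteq [k]$. The line $[e^i, e^j] \cap \Delta_{k,n}$ is a path of $n+1$ consecutive grid points; the clusters assigned by $P$ along this path form a sequence whose set of distinct values is exactly $D_{i,j}(P)$. A path taking values in $|D_{i,j}(P)|$ distinct labels must have at least $|D_{i,j}(P)| - 1$ transitions, so at least $|D_{i,j}(P)| - 1$ of the $n$ line-edges are cut by $P$. Summing over all $\binom{k}{2}$ pairs,
\begin{align*}
\cost(P, w')
= \frac{1}{\binom{k}{2}} \sum_{\{i,j\} \subseteq [k]} \#\{\text{edges cut on line } [e^i,e^j]\}
\geqs \frac{1}{\binom{k}{2}} \sum_{\{i,j\}} (|D_{i,j}(P)| - 1)
= D(P) - 1,
\end{align*}
which is exactly the required bound.

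There is essentially no obstacle here: the construction is transparent and both conditions reduce to the elementary observation that a $d$-valued sequence along a path needs $\geqs d-1$ transitions. The only thing that must be checked carefully is that weighted edges decompose cleanly across the $\binom{k}{2}$ lines (so no edge is double-counted), which follows immediately from the $\|x-y\|_1 = 2/n$ constraint on $E_{k,n}$.
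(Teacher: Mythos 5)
Your construction and argument are essentially identical to the paper's: the same uniform weight $1/\binom{k}{2}$ on the edges lying on the $\binom{k}{2}$ lines $[e^i,e^j]$, the same observation that $\lpc(w')=1$, and the same counting that a sequence of $n+1$ points taking $|D_{i,j}(P)|$ distinct labels must have at least $|D_{i,j}(P)|-1$ transitions, averaged over pairs $\{i,j\}$. Your slightly more explicit bookkeeping (checking each line has exactly $n$ edges and that no edge is double-counted) is harmless and correct; the proof is sound.
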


\begin{proof}
The gap construction is intuitive; since $D(P)$ being large means that many edges on the lines between two vertices of the simplex are cut, we simply define the gap so that we place all the weights equally on such edges. Specifically, define $w'(x, y)$ for all $(x, y) \in E_{k, n}$ as follows.
\begin{align*}
w'(x, y) =
\begin{cases}
\frac{1}{\binom{k}{2}} & \text{ if } |\supp(x) \cup \supp(y)| = 2, \\
0 & \text{ otherwise.}
\end{cases}
\end{align*}

Clearly, $\lpc(w') = 1$ since each line from $e^i$ to $e^j$ contributes $1/\binom{k}{2}$ to the value of the solution.

Now, consider any $k$-way cut $P$. The cost of the cut can be written as
\begin{align*}
\cost(P, w') &= \sum_{(x, y) \in E_{k, n}} w'(x, y) \cdot \ind[P(x) \ne P(y)] \\
&= \sum_{(x, y) \in E_{k, n}: \atop |\supp(x) \cup \supp(y)| = 2} \frac{1}{\binom{k}{2}} \cdot \ind[P(x) \ne P(y)] \\
&= \frac{1}{\binom{k}{2}} \sum_{\{i, j\} \subseteq [k]} \sum_{(x, y) \in E_{k, n}: \atop \supp(x) \cup \supp(y) = \{i, j\}} \ind[P(x) \ne P(y)] \\
&= \E_{\{i, j\} \subseteq [k]} \left[\sum_{(x, y) \in E_{k, n}: \atop \supp(x) \cup \supp(y) = \{i, j\}} \ind[P(x) \ne P(y)]\right].
\end{align*}
The term in the expectation is simply the number of edges on the line between $e^i$ and $e^j$ cut by $P$. Since we know that the points on this line are assigned to $|D_{i, j}(P)|$ different clusters, we can lower bound the number of edges cut by $|D_{i,j}(P)| - 1$, which gives
\begin{align*}
\cost(P, w') \geqs \E_{\{i, j\} \subseteq [k]} \left[|D_{i, j}(P)| - 1\right] = D(P) - 1.
\end{align*}
This completes the proof of Lemma~\ref{lem:edgegap}.
\end{proof}

\subsection{Putting Things Together}

It is now easy to prove our main theorem by simply picking the weight $\tw$ to be an appropriate linear combination of $\ow$ and $w'$.

\begin{proofof}[Theorem~\ref{thm:main-refined}]
Let $\ow$ and $w'$ be as defined in Lemma~\ref{lem:rootk} and Lemma~\ref{lem:edgegap} respectively. We define $\tw$ to be $$\tw = \left(\frac{k - 2}{k - 1}\right) \ow + \left(\frac{1}{k - 1}\right) w'.$$

The value of the canonical LP solution of $(\Delta_{k, n}, E_{k, n}, \tw)$ is simply
\begin{align*}
\lpc(\tw) &= \left(\frac{k - 2}{k - 1}\right) \lpc(\ow) + \left(\frac{1}{k - 1}\right) \lpc(w') \\
&= \left(\frac{k - 2}{k - 1}\right) (5/6 + O(1/n)) + \left(\frac{1}{k - 1}\right) \\
&= \left(5 + \frac{1}{k - 1}\right) / 6 + O(1/n).
\end{align*}

Moreover, for any $k$-way cut $P$, we have
\begin{align*}
\cost(P, \tw) &= \left(\frac{k - 2}{k - 1}\right) \cost(P, \ow) + \left(\frac{1}{k - 1}\right) \cost(P, w') \\
(\text{from Lemma}~\ref{lem:rootk-refined} \text{ and Lemma}~\ref{lem:edgegap}) &\geqs \left(\frac{k - 2}{k - 1}\right)\left(1 - \frac{D(P) - 2}{k - 2}\right) + \left(\frac{1}{k - 1}\right)\left(D(P) - 1\right) \\
&= 1.
\end{align*}

Hence, $\ow$ satisfies the two properties in Theorem~\ref{thm:main-refined}.
\end{proofof}

\section{Conclusion and Discussions} \label{sec:open}

In this paper, we construct integrality gap instances of ratio $6/(5 + \frac{1}{k - 1}) - \varepsilon$ for every $\varepsilon > 0$ for the C{\u{a}}linescu-Karloff-Rabani linear program relaxation of Multiway Cut with $k \geqs 3$ terminals. Combined with the result of Manokaran \etal~\cite{MNRS08}, our result implies that it is UG-hard to approximate Multiway Cut with $k$ terminals to within $6/(5 + \frac{1}{k - 1}) - \varepsilon$ ratio of the optimum. Our gap construction is based on the observation that $\ttau^*_{3, n} - O(n/K)$ lower bounds $\tau^*_K$; with this observation, we extend Karger \etal's characterization of 3-way cuts~\cite{KKSTY04} to non-opposite cuts, which ultimately leads to a lower bound of $6/5 - O(1/n)$ for $\ttau^*_{3, n}$. We also observe that Freund and Karloff's integrality gap instance~\cite{FK00} is subsumed by our approach.

Even with our result, the approximability of Multiway Cut is far from resolved as the best known approximation algorithm by Sharma and Vondr\'{a}k~\cite{SV14} only achieves approximation ratio of $1.2965$ as $k$ goes to infinity whereas our gap is only $1.2$. It may be tempting to try to prove a stronger lower bound for $\ttau^*_{3, n}$ in order to construct improved integrality gaps. Unfortunately, this is impossible since a slight modification of Karger \etal's algorithm for Multiway Cut with $3$ terminals shows that $\ttau^*_3 \leqs 1.2$. For completeness, we include the modified algorithm in Appendix~\ref{app:rounding}. With this in mind, it is likely that, to construct better integrality gaps, we need to exploit geometric properties of higher-dimensional simplexes. This seems much harder to do as all the known gaps, including ours, only deal with the two-dimensional simplex $\Delta_3$, whose nice properties allow the cuts to be characterized in a convenient manner.

Finally, we would like to note that the proof in Section~\ref{sec:k} can easily be converted to a proof of $\ttau^*_3 - O(1/K) \leqs \tau^*_K$. Notice that $\ttau^*_3$ appears in the inequality as opposed to its discretized approximation $\ttau^*_{3, n}$ as sketched in the introduction. Unfortunately, we do not know of any way of extending this proof to show a direct relation between $\ttau^*_k$ and $\tau^*_K$ for $k > 3$. Such relation may be crucial if one wants to construct an integrality gap instance by proving a lower bound for $\ttau^*_k$.

\bibliographystyle{alpha}
\bibliography{main}

\appendix

\section{The Freund-Karloff Gap, Revisited} \label{app:FK}

In this section, we will rephrase the Freund-Karloff integrality gap~\cite{FK00} in terms of a lower bound for $\ttau^*_{3,2}$. In $\Delta_{3, 2}$, there are only six vertices: the three vertices of the simplex and the three middle points. The Freund-Karloff gap sets the weight of the edges between a simplex vertex and a middle point to be 1/6 and sets the weight of the edges among the middle points to be 1/4.

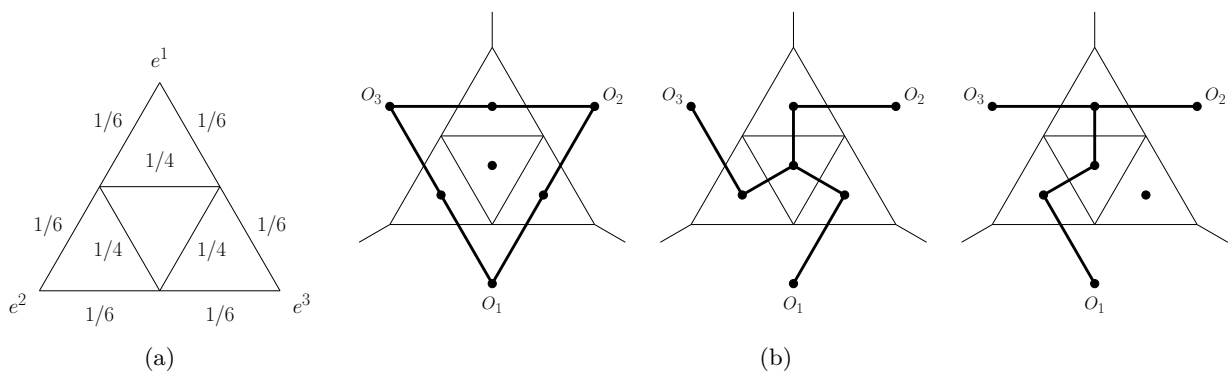
\begin{figure}[h!]
    \centering
    \begin{subfigure}[b]{0.25\textwidth}
        \resizebox{\textwidth}{!}{\begin{tikzpicture}
\draw [color=black] (5.0000000000, -2.8867513459) -- (0.0000000000, -2.8867513459);
\draw [color=black] (5.0000000000, -2.8867513459) -- (2.5000000000, 1.4433756730);
\draw [color=black] (0.0000000000, -2.8867513459) -- (-5.0000000000, -2.8867513459);
\draw [color=black] (0.0000000000, -2.8867513459) -- (2.5000000000, 1.4433756730);
\draw [color=black] (0.0000000000, -2.8867513459) -- (-2.5000000000, 1.4433756730);
\draw [color=black] (-5.0000000000, -2.8867513459) -- (-2.5000000000, 1.4433756730);
\draw [color=black] (2.5000000000, 1.4433756730) -- (-2.5000000000, 1.4433756730);
\draw [color=black] (2.5000000000, 1.4433756730) -- (0.0000000000, 5.7735026919);
\draw [color=black] (-2.5000000000, 1.4433756730) -- (0.0000000000, 5.7735026919);
\node at (0.0,6.81273317644) {\Huge $e^1$};
\node at (-5.9,-3.40636658822) {\Huge $e^2$};
\node at (5.9,-3.40636658822) {\Huge $e^3$};
\node at (2.5,-3.92598183049) {\Huge 1/6};
\node at (4.65,-0.202072594216) {\Huge 1/6};
\node at (-2.5,-3.92598183049) {\Huge 1/6};
\node at (2.15,-1.24130307876) {\Huge 1/4};
\node at (-2.15,-1.24130307876) {\Huge 1/4};
\node at (-4.65,-0.202072594216) {\Huge 1/6};
\node at (0.0,2.48260615752) {\Huge 1/4};
\node at (2.15,4.12805442471) {\Huge 1/6};
\node at (-2.15,4.12805442471) {\Huge 1/6};
\end{tikzpicture}}
        \caption{}
        \label{fig:fk-gap}
    \end{subfigure}
    ~ 
    \begin{subfigure}[b]{0.7\textwidth}
    	\begin{tabular}{c c c}
    	\resizebox{0.31\textwidth}{!}{\begin{tikzpicture}
\draw [color=black] (5.0000000000, -2.8867513459) -- (0.0000000000, -2.8867513459);
\draw [color=black] (5.0000000000, -2.8867513459) -- (2.5000000000, 1.4433756730);
\draw [color=black] (0.0000000000, -2.8867513459) -- (-5.0000000000, -2.8867513459);
\draw [color=black] (0.0000000000, -2.8867513459) -- (2.5000000000, 1.4433756730);
\draw [color=black] (0.0000000000, -2.8867513459) -- (-2.5000000000, 1.4433756730);
\draw [color=black] (-5.0000000000, -2.8867513459) -- (-2.5000000000, 1.4433756730);
\draw [color=black] (2.5000000000, 1.4433756730) -- (-2.5000000000, 1.4433756730);
\draw [color=black] (2.5000000000, 1.4433756730) -- (0.0000000000, 5.7735026919);
\draw [color=black] (-2.5000000000, 1.4433756730) -- (0.0000000000, 5.7735026919);
\draw [color=black] (0, 5.7735026919) -- (0.0, 7.50555349947);
\draw [color=black] (-5.0, -2.88675134595) -- (-6.5, -3.75277674973);
\draw [color=black] (5.0, -2.88675134595) -- (6.5, -3.75277674973);
\draw[fill] (0.0000000000, -5.7735026919) circle [radius=0.2000000000];
\node at (0.0,-6.81273317644) {\Huge $O_1$};
\draw[fill] (5.0000000000, 2.8867513459) circle [radius=0.2000000000];
\node at (5.9,3.40636658822) {\Huge $O_2$};
\draw[fill] (-5.0000000000, 2.8867513459) circle [radius=0.2000000000];
\node at (-5.9,3.40636658822) {\Huge $O_3$};
\draw[fill] (2.5000000000, -1.4433756730) circle [radius=0.2000000000];
\draw[fill] (-2.5000000000, -1.4433756730) circle [radius=0.2000000000];
\draw[fill] (0.0000000000, -0.0000000000) circle [radius=0.2000000000];
\draw[fill] (0.0000000000, 2.8867513459) circle [radius=0.2000000000];
\draw [color=black, line width=4] (0.0000000000, 2.8867513459) -- (5.0000000000, 2.8867513459);
\draw [color=black, line width=4] (0.0000000000, 2.8867513459) -- (-5.0000000000, 2.8867513459);
\draw [color=black, line width=4] (-2.5000000000, -1.4433756730) -- (0.0000000000, -5.7735026919);
\draw [color=black, line width=4] (-2.5000000000, -1.4433756730) -- (-5.0000000000, 2.8867513459);
\draw [color=black, line width=4] (2.5000000000, -1.4433756730) -- (0.0000000000, -5.7735026919);
\draw [color=black, line width=4] (2.5000000000, -1.4433756730) -- (5.0000000000, 2.8867513459);
\end{tikzpicture}} & 
    	\resizebox{0.31\textwidth}{!}{\begin{tikzpicture}
\draw [color=black] (5.0000000000, -2.8867513459) -- (0.0000000000, -2.8867513459);
\draw [color=black] (5.0000000000, -2.8867513459) -- (2.5000000000, 1.4433756730);
\draw [color=black] (0.0000000000, -2.8867513459) -- (-5.0000000000, -2.8867513459);
\draw [color=black] (0.0000000000, -2.8867513459) -- (2.5000000000, 1.4433756730);
\draw [color=black] (0.0000000000, -2.8867513459) -- (-2.5000000000, 1.4433756730);
\draw [color=black] (-5.0000000000, -2.8867513459) -- (-2.5000000000, 1.4433756730);
\draw [color=black] (2.5000000000, 1.4433756730) -- (-2.5000000000, 1.4433756730);
\draw [color=black] (2.5000000000, 1.4433756730) -- (0.0000000000, 5.7735026919);
\draw [color=black] (-2.5000000000, 1.4433756730) -- (0.0000000000, 5.7735026919);
\draw [color=black] (0, 5.7735026919) -- (0.0, 7.50555349947);
\draw [color=black] (-5.0, -2.88675134595) -- (-6.5, -3.75277674973);
\draw [color=black] (5.0, -2.88675134595) -- (6.5, -3.75277674973);
\draw[fill] (0.0000000000, -5.7735026919) circle [radius=0.2000000000];
\node at (0.0,-6.81273317644) {\Huge $O_1$};
\draw[fill] (5.0000000000, 2.8867513459) circle [radius=0.2000000000];
\node at (5.9,3.40636658822) {\Huge $O_2$};
\draw[fill] (-5.0000000000, 2.8867513459) circle [radius=0.2000000000];
\node at (-5.9,3.40636658822) {\Huge $O_3$};
\draw[fill] (2.5000000000, -1.4433756730) circle [radius=0.2000000000];
\draw[fill] (-2.5000000000, -1.4433756730) circle [radius=0.2000000000];
\draw[fill] (0.0000000000, -0.0000000000) circle [radius=0.2000000000];
\draw[fill] (0.0000000000, 2.8867513459) circle [radius=0.2000000000];
\draw [color=black, line width=4] (0.0000000000, 2.8867513459) -- (0.0000000000, -0.0000000000);
\draw [color=black, line width=4] (0.0000000000, 2.8867513459) -- (5.0000000000, 2.8867513459);
\draw [color=black, line width=4] (-2.5000000000, -1.4433756730) -- (0.0000000000, -0.0000000000);
\draw [color=black, line width=4] (-2.5000000000, -1.4433756730) -- (-5.0000000000, 2.8867513459);
\draw [color=black, line width=4] (2.5000000000, -1.4433756730) -- (0.0000000000, -0.0000000000);
\draw [color=black, line width=4] (2.5000000000, -1.4433756730) -- (0.0000000000, -5.7735026919);
\end{tikzpicture}} &
    	\resizebox{0.31\textwidth}{!}{\begin{tikzpicture}
\draw [color=black] (5.0000000000, -2.8867513459) -- (0.0000000000, -2.8867513459);
\draw [color=black] (5.0000000000, -2.8867513459) -- (2.5000000000, 1.4433756730);
\draw [color=black] (0.0000000000, -2.8867513459) -- (-5.0000000000, -2.8867513459);
\draw [color=black] (0.0000000000, -2.8867513459) -- (2.5000000000, 1.4433756730);
\draw [color=black] (0.0000000000, -2.8867513459) -- (-2.5000000000, 1.4433756730);
\draw [color=black] (-5.0000000000, -2.8867513459) -- (-2.5000000000, 1.4433756730);
\draw [color=black] (2.5000000000, 1.4433756730) -- (-2.5000000000, 1.4433756730);
\draw [color=black] (2.5000000000, 1.4433756730) -- (0.0000000000, 5.7735026919);
\draw [color=black] (-2.5000000000, 1.4433756730) -- (0.0000000000, 5.7735026919);
\draw [color=black] (0, 5.7735026919) -- (0.0, 7.50555349947);
\draw [color=black] (-5.0, -2.88675134595) -- (-6.5, -3.75277674973);
\draw [color=black] (5.0, -2.88675134595) -- (6.5, -3.75277674973);
\draw[fill] (0.0000000000, -5.7735026919) circle [radius=0.2000000000];
\node at (0.0,-6.81273317644) {\Huge $O_1$};
\draw[fill] (5.0000000000, 2.8867513459) circle [radius=0.2000000000];
\node at (5.9,3.40636658822) {\Huge $O_2$};
\draw[fill] (-5.0000000000, 2.8867513459) circle [radius=0.2000000000];
\node at (-5.9,3.40636658822) {\Huge $O_3$};
\draw[fill] (2.5000000000, -1.4433756730) circle [radius=0.2000000000];
\draw[fill] (-2.5000000000, -1.4433756730) circle [radius=0.2000000000];
\draw[fill] (0.0000000000, -0.0000000000) circle [radius=0.2000000000];
\draw[fill] (0.0000000000, 2.8867513459) circle [radius=0.2000000000];
\draw [color=black, line width=4] (0.0000000000, 2.8867513459) -- (5.0000000000, 2.8867513459);
\draw [color=black, line width=4] (0.0000000000, 2.8867513459) -- (-5.0000000000, 2.8867513459);
\draw [color=black, line width=4] (0.0000000000, 2.8867513459) -- (0.0000000000, -0.0000000000);
\draw [color=black, line width=4] (-2.5000000000, -1.4433756730) -- (0.0000000000, -0.0000000000);
\draw [color=black, line width=4] (-2.5000000000, -1.4433756730) -- (0.0000000000, -5.7735026919);
\end{tikzpicture}}
    	\end{tabular}
    	\caption{}
    	\label{fig:fk-ball-corner}
    \end{subfigure}
    \caption{The Freund-Karloff Integrality Gap and Candidate Optimal Non-Opposite Cuts. Figure~\ref{fig:fk-gap} demonstrates the Freund-Karloff integrality gap whereas Figure~\ref{fig:fk-ball-corner} shows the only three possible ball and corner cuts (up to rotations and reflections).}
\end{figure}

The illustrations of the gap can be found in Figure~\ref{fig:fk-gap}. Clearly, the canonical LP solution has value 7/8. Moreover, discounting symmetry, there are only three ball and corner cuts, as shown in Figure~\ref{fig:fk-ball-corner}. Obviously, every cut has value at least one. Hence, this is an 8/7 lower bound on $\ttau^*_{3,2}$ as intended.

Finally, we note that, by using the technique from Section~\ref{sec:k}, one can arrive at $8/(7 + \frac{1}{k - 1})$ integrality gap for Multiway Cut with $k$ terminals, which is exactly the result of Freund and Karloff.

\section{An Inequality Between $\ttau^*_{k, n}$ and $\tau^*_K$} \label{app:tk}

In this section, we prove a relation between $\ttau^*_{k, n}$ and $\tau^*_K$ for $K > k$. This generalizes the inequality between $\ttau^*_{3, n}$ and $\tau^*_K$, whose proof sketch was presented in the introduction.

\begin{lemma}
For every $K > k$, $\ttau^*_{k, n} - O(kn/(K - k)) \leqs \tau^*_K$.
\end{lemma}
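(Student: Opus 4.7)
The plan is to generalize the proof sketch given in the introduction for $k = 3$ to arbitrary $k$, thereby reducing lower bounds on $\ttau^*_{k, n}$ to lower bounds on $\tau^*_K$. Specifically, starting from a distribution $\cP$ on $K$-way cuts of $\Delta_K$ achieving $\tau_K(\cP) = \tau^*_K$, I will construct a distribution $\tilde \cP$ on non-opposite cuts of $\Delta_{k, n}$ whose maximum density is at most $\tau^*_K + O(kn / (K - k))$. Rearranging this inequality yields the claim.

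I would first describe the sampling procedure for $\tilde P \sim \tilde \cP$. Sample $P \sim \cP$; then sample an ordered tuple of $k$ distinct indices $i_1, \dots, i_k$ uniformly from $[K]$. Analogously to the $k = 3$ construction, define the projected map $P_{\{i_1, \dots, i_k\}} : \Delta_k \to [k+1]$ by setting $P_{\{i_1, \dots, i_k\}}(x) = j$ if $P(x_1 e^{i_1} + \cdots + x_k e^{i_k}) = i_j$ for some $j \in [k]$, and $P_{\{i_1, \dots, i_k\}}(x) = k+1$ otherwise. If the resulting $P_{\{i_1, \dots, i_k\}}$ is already non-opposite, set $\tilde P = P_{\{i_1, \dots, i_k\}}$; otherwise, for each point $x$ such that $P_{\{i_1, \dots, i_k\}}(x) \in [k] \setminus \supp(x)$ (a \emph{bad} point), reassign $\tilde P(x) = k+1$, and keep $\tilde P(x) = P_{\{i_1, \dots, i_k\}}(x)$ for all other points. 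This $\tilde P$ is non-opposite by construction, and $\tilde P(e^j) = j$ is preserved since $e^j$ has $\supp(e^j) = \{j\}$ and $P(e^{i_j}) = i_j$.

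Next, I would bound, for any two distinct $x, y \in \Delta_{k, n}$, the probability that $\tilde P(x) \ne \tilde P(y)$. As in the $k = 3$ sketch, $\tilde P(x) \ne \tilde P(y)$ implies either (a) $P_{\{i_1, \dots, i_k\}}(x) \ne P_{\{i_1, \dots, i_k\}}(y)$, or (b) exactly one of $x, y$ is bad. For case (a), the definition of $P_{\{i_1, \dots, i_k\}}$ forces $P(\sum_j x_j e^{i_j}) \ne P(\sum_j y_j e^{i_j})$, which by the density bound for $\cP$ happens with probability at most $\tau^*_K \cdot \tfrac12 \|\sum_j (x_j - y_j) e^{i_j}\|_1 = \tau^*_K \cdot \tfrac12 \|x - y\|_1$, using that the embedding preserves $\ell_1$-distance.

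The main (and essentially only) obstacle is to bound the probability in case (b), namely that a fixed point $x$ is bad. A point $x$ is bad exactly when there exists $j \in [k] \setminus \supp(x)$ such that $P_{\{i_1, \dots, i_k\}}(x) = j$, which in turn is equivalent to $P(\sum_{\ell} x_\ell e^{i_\ell}) = i_j$. Since $x_j = 0$, the embedded point $\sum_\ell x_\ell e^{i_\ell}$ does not depend on $i_j$, so conditioning on the remaining $i_\ell$'s fixes this embedded point and hence fixes $P$ of it; the remaining $i_j$ is uniform over a set of $K - k + 1$ elements, so the conditional probability that $P(\sum_\ell x_\ell e^{i_\ell}) = i_j$ is at most $1/(K - k + 1)$. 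Union-bounding over the at most $k$ values of $j \in [k] \setminus \supp(x)$ gives $\Pr[x \text{ bad}] \leqs k/(K - k + 1) = O(k/(K - k))$, and a further union bound over $x$ and $y$ shows that case (b) contributes at most $O(k/(K - k))$ to the separation probability. Putting the two cases together and dividing by $\tfrac12 \|x - y\|_1 \geqs 1/n$ gives
\[
\frac{\Pr_{\tilde P \sim \tilde \cP}[\tilde P(x) \ne \tilde P(y)]}{\tfrac12 \|x - y\|_1} \;\leqs\; \tau^*_K + O\!\left(\frac{kn}{K - k}\right),
\]
so $\tau_{k, n}(\tilde \cP) \leqs \tau^*_K + O(kn/(K - k))$, which implies $\ttau^*_{k, n} - O(kn/(K-k)) \leqs \tau^*_K$ as required.
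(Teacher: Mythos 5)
Your proposal is correct and follows essentially the same strategy as the paper's Appendix~\ref{app:tk} proof: sample $P$ from the optimal distribution and a uniformly random injection $[k]\to[K]$, project $P$ to the induced $k$-face, reassign the resulting ``bad'' boundary points to cluster $k+1$, split the separation event into the projected-cut separation and the bad-point events, and bound the bad-point probability by conditioning on the other indices. The only cosmetic difference is the conditioning used for that last bound --- you condition on all $i_\ell$ with $\ell\ne j$ (giving $1/(K-k+1)$ per $j$), while the paper conditions only on $f|_{\supp(x)}$ (giving $1/(K-|\supp(x)|)\leqs 1/(K-k)$) --- which yields the same $O(k/(K-k))$ asymptotics.
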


\begin{proof}
Suppose that $\cP$ is the distribution on $K$-way cuts such that $\tau_K(\cP) = \tau^*_K$. We define a distribution $\tilde \cP$ on non-opposite cuts of $\Delta_k$ by a sampling process for $\tilde P \sim \tilde \cP$ as follows. First, sample $P \sim \cP$. Then, randomly select an injection $f: [k] \rightarrow [K]$. In short, we will set $\tilde P$ to be the restriction of $P$ on the face induced by $e^{f(1)}, \dots, e^{f(k)}$ and ``fix'' $\tilde P$ so that it is non-opposite. 

For convenience, let $x \circ f$ denote $\sum_{i \in [k]} x_i e^{f(i)} \in \Delta_K$ for every $x \in \Delta_k$. $\tilde P$ can then be defined as follows.
\begin{align*}
\tilde P(x) =
\begin{cases}
f^{-1}(P(x \circ f)) & \text{ if } P(x \circ f) \in \supp(x \circ f), \\
k + 1 & \text{ otherwise.}
\end{cases}
\end{align*}

Since $\supp(x \circ f) = f(\supp(x))$, it is clear from the definition above that $\tilde P$ is indeed a non-opposite cut of $\Delta_k$. We are left to only argue that $\tau_{k, n}(\tilde \cP) \leqs \tau_K(\cP) + O(k/K)$. Consider any two different points $x, y \in \Delta_{k, n}$. Observe that, if $\tilde P(x) \ne \tilde P(y)$, then either $P(x \circ f) \ne P(y \circ f)$ or $P(x \circ f) = P(y \circ f) \in \supp(x \circ f) \oplus \supp(y \circ f)$ where $\oplus$ denotes the symmetric difference between the two sets. The latter also implies that either $P(x \circ f) \in f([k] \setminus \supp(x))$ or $P(y \circ f) \in f([k] \setminus \supp(y))$. By union bound, we have
\begin{align*}
\Pr_{\tP \sim \tcP}[\tP(x) \ne \tP(y)] &\leqs \Pr_{P \sim \cP, f}[P(x \circ f) \ne P(y \circ f)] + \Pr_{P \sim \cP, f}[P(x \circ f) = P(y \circ f) \in \supp(x \circ f) \oplus \supp(y \circ f)]\\
&\leqs \tau^*_K \cdot \frac{1}{2} \|x - y\|_1 + \Pr_{P \sim \cP, f}[P(x \circ f) \in f([k] \setminus \supp(x))] + \Pr_{P \sim \cP, f}[P(y \circ f) \in f([k] \setminus \supp(y))].
\end{align*}

Let us now bound $\Pr_{P \sim \cP, f}[P(x \circ f) \in f([k] \setminus \supp(x))]$. Fix $P$ and fix $g = f|_{\supp(x)}$. Note that fixing $g$ already determines $x \circ f$ and hence $P(x \circ f)$. If $P(x \circ f)$ is in $g(\supp(x)) = f(\supp(x))$, then clearly $P(x \circ f) \notin f([k] \setminus \supp(x))$. On the other hand, if $P(x \circ f) \notin f(\supp(x))$, then $P(x \circ f) \in f([k] \setminus \supp(x))$ if and only if $P(x \circ f) = f(i)$ for some $i \in [k] \setminus \supp(x)$. For a fixed $i$, the probability that $P(x \circ f) = f(i)$ is only $1 / (K - \supp(x)) \leqs 1 / (K - k)$. Hence, by union bound, $\Pr_{P \sim \cP, f}[P(x \circ f) \in f([k] \setminus \supp(x))] \leqs k / (K - k)$.

A similar bound can be derived for $\Pr_{P \sim \cP, f}[P(y \circ f) \in f([k] \setminus \supp(y))]$, which implies that 
\begin{align*}
\Pr_{\tP \sim \tcP}[\tP(x) \ne \tP(y)]
&\leqs \tau^*_K \cdot \frac{1}{2} \|x - y\|_1 + 2k/(K - k).
\end{align*}

Finally, note that, for all $x \ne y \in \Delta_{k, n}$, the distance $\|x - y\|_1$ is at least $2/n$. Hence, we have $\Pr_{\tP \sim \tcP}[\tP(x) \ne \tP(y)] \leqs (\tau^*_K + O(kn/(K - k))) \cdot \frac{1}{2} \|x - y\|_1$
, meaning that $\tilde \tau^*_k - O(kn/(K - k)) \leqs \tau^*_K$ as desired.
\end{proof}

\section{A Rounding Scheme for Non-Opposite Cuts of $\Delta_3$} \label{app:rounding}

In this section, we provide a probability distribution $\cP$ on non-opposite cuts of $\Delta_3$ such that $\tau(\cP) \leqs 6/5$, thereby proving that $\ttau^*_3 \leqs 6/5$. The probability distribution is a slight modification of the algorithm for Multiway Cut with 3 terminals by Karger \etal~\cite{KKSTY04}. 

A cut $P \sim \cP$ is drawn as follows:
\begin{itemize}
\item With probability 1/5, we select $r \in [2/3, 1)$ uniformly at random. Let $P$ be the 3-corner cut in which each $x \in \Delta_3$ is assigned to $i$ if and only if $x_i > r$ and is assigned to $4$ if none of its coordinates exceeds $r$.
\item With probability 4/5, we pick $P$ to be a ball cut by the following process. With probability 1/2, let $r$ be a random point on the line between $(2/3, 1/3, 0)$ and $(0, 2/3, 1/3)$. Otherwise, let $r$ be a random point on the line between $(2/3, 0, 1/3)$ and $(0, 1/3, 2/3)$. Consider segments starting from $r$ which are parallel to the sides of $\Delta_3$. There are six such segments with two of them ending on each side of the triangle. For each side, randomly pick one of the two segments. Finally, let $P$ be the cut induced by the three segments picked.
\end{itemize}
We note that the distribution is the same as that in Theorem 4.3 of~\cite{KKSTY04} except that here the corner cut is chosen with probability 1/5 instead of 3/11 as in~\cite{KKSTY04}; for illustrations of the cuts, please refer to Figure 5 in~\cite{KKSTY04}. Since the proof remains essentially the same as that in the mentioned theorem, we do not provide it here.


\end{document}